\newcommand{\tr}{\operatorname{Tr}}
\DeclareMathOperator*{\argmin}{arg\,min}
\newcolumntype{D}{>{\centering\arraybackslash}p{0.26\textwidth}}
\newcolumntype{C}{>{\centering\arraybackslash}p{0.23\textwidth}}
\newcommand{\dims}{d}
\newcommand{\nspu}{m}
\newcommand{\cD}{{\mathcal{D}}}
\newcommand{\povmset}{{\mathfrak{M}}}
\newcommand{\nqubits}{{N}}
\newcommand{\opnorm}[1]{{\left\|#1\right\|}_{\text{op}}}
\newcommand{\tracenorm}[1]{{\left\|#1\right\|}_{1}}
\newcommand{\hsnorm}[1]{{\left\|#1\right\|}_{\text{HS}}}
\newcommand{\supparen}[1]{^{(#1)}}
\newcommand{\cA}{\mathcal{A}}
\newcommand{\bfP}{\mathbf{P}}
\newcommand{\out}{{x}}
\def\multiset#1#2{\ensuremath{\left(\kern-.3em\left(\genfrac{}{}{0pt}{}{#1}{#2}\right)\kern-.3em\right)}}
\newcommand{\ham}[2]{\operatorname{d}_{\rm Ham}\left(#1,#2\right)}
\newcommand{\EMD}[2]{\operatorname{d}_{\rm EM}\left(#1,#2\right)}
\newcommand*{\proj}[1]{|#1\rangle\!\langle #1|}
\newcommand{\qmm}{{\rho_{\text{mm}}}} 
\newcommand{\HH}{\mathbb{H}}
\newcommand{\Herm}[1]{{\HH_{#1}}}
\newcommand{\qbit}[1]{|{#1}\rangle}
\newcommand{\qadjoint}[1]{\langle{#1}|}
\newcommand{\qproj}[1]{\qbit{#1}\qadjoint{#1}}
\newcommand{\qdotprod}[2]{\langle#1|#2\rangle}
\newcommand{\hdotprod}[2]{\left\langle#1,#2\right\rangle}
\newcommand{\matdotprod}[3]{\langle#1|#2|#3\rangle}
\newcommand{\eye}{\mathbb{I}}
\newcommand{\rk}{{r}}
\newcommand{\bx}{\mathbf{x}}
\newcommand{\outset}{{\mathcal{X}}}
\newcommand{\dm}{\mathrm{d}}
\newcommand{\Sp}{\mathbb{S}}
\newcommand{\Sim}{\mathcal{S}}
\newcommand{\Haar}[1]{{\mathcal{U}_{#1}}}
\newcommand{\POVM}{\mathcal{M}}
\newcommand{\cycle}{\mathcal{C}}
\newcommand{\nobs}{M}
\newcommand{\Obs}{O}
\title{Shadow Tomography Against Adversaries}
\author{
    \begin{tabular}[t]{c c c c}
   Maryam Aliakbarpour\thanks{Computer science department and Ken Kennedy Institute.} &Vladimir Braverman & Nai-Hui Chia &  Chia-Ying Lin \\
 Rice University & Johns Hopkins University & Rice University & Rice University\\ 
\small \texttt{maryama@rice.edu} &\small \texttt{vova@cs.jhu.edu} & \small \texttt{nc67@rice.edu} & \small \texttt{cl207@rice.edu}
\end{tabular}
\vspace{2ex}\\
\begin{tabular}[t]{c c c}
Yuhan Liu & Aadil Oufkir & Yu-Ching Shen\\
Rice University & University Mohammed VI Polytechnic & Rice University\\ 
\small \texttt{yuhan-liu@rice.edu} &\small\texttt{aadil.oufkir@gmail.com} & \small\texttt{ycshen@rice.edu}
\end{tabular}
}
\newcommand{\yc}[1]{{\color{magenta} [{YC:} #1]}}
\newcommand{\nai}[1]{{\color{brown} [{Nai:} #1]}}
\newcommand{\yc}[1]{}
\newcommand{\nai}[1]{}
\newcommand{\remove}{\xi}
\newcommand{\truncation}{\xi}
\newcommand{\corruption}{\gamma}
\begin{document}
\maketitle
\begin{abstract}
Learning about quantum states is a fundamental problem in physics and quantum computing. As people are often interested in certain properties of quantum states instead of a complete description, shadow tomography has gained significant attention, where the goal is to learn the expectation values of $M$ observables $O_1, \ldots, O_M$  with $\varepsilon$ accuracy. 
In near-term devices, however, noise is prevalent and often unexpected. Thus, it is crucial to design algorithms that work well in the worst case. 

We study the practical single-copy setting and assume $\gamma$-fraction of the \emph{outcomes} can be arbitrarily corrupted by an adversary.
We show that all non-adaptive shadow tomography algorithms must incur an error of $\varepsilon=\tilde{\Omega}(\gamma\min\{\sqrt{M}, \sqrt{d}\})$ for some choice of observables, even with unlimited copies. 
Unfortunately, the classical shadows algorithm by \cite{huang2020predicting} and naive algorithms that directly measure each observable suffer even more.
We design an algorithm that achieves an error of $\eps=\tilde{O}(\gamma\max_{i\in[M]}\|O_i\|_{HS})$, which nearly matches our worst-case error lower bound for $M\ge d$ and guarantees better accuracy when the observables have stronger structure. 
Remarkably, the algorithm only needs $n=\frac{1}{\gamma^2}\log(M/\delta)$ copies to achieve that error with probability at least $1-\delta$, matching the sample complexity of the classical shadows algorithm that achieves the same error without corrupted measurement outcomes. Our algorithm is conceptually simple and easy to implement. Classical simulation for fidelity estimation shows that our algorithm enjoys much stronger robustness than~\cite{huang2020predicting} under adversarial noise.

Finally, based on a reduction from full-state tomography to shadow tomography, we prove that for rank $r$ states, both the near-optimal asymptotic error of $\eps=\tilde{O}(\gamma\sqrt{r})$ \emph{and} copy complexity $\tilde{O}(dr^2/\eps^2)=\tilde{O}(dr/\gamma^2)$ can be achieved for adversarially robust state tomography, closing the large gap in \cite{AliakbarpourBCL2025robustquantum} where optimal error can only be achieved using pseudo-polynomial number of copies in $\dims$.

\end{abstract}

\ynote{
\begin{enumerate}
    \item HKP lower bound (section 4.3 , Chia-Ying, Yu-Ching)
    \item Comments in the algorithm proof
    \item Go over introduction (Nai)
    \item Preliminary paraphrase

\end{enumerate}
}

\section{Introduction}
Learning about quantum states is a fundamental problem in physics and quantum computing.
In the canonical setting, we are given identical copies of an unknown state $\rho$ and need to design measurements and algorithms to learn information about $\rho$. 
While learning the full state description is desirable, it is often prohibitively costly. Even in the most ideal scenario where we can store all copies and perform fully quantum operations, the number of copies needed scales as $\Theta(4^{\nqubits})$ for $\nqubits$-qubit states in the worst case \cite{ODonnellW16,HaahHJWY17}. 
The scaling is even worse when we measure each copy or each qubit separately, which are more practical measurement setups in the near-term: $\Theta(8^{\nqubits})$~\cite{guctua2020fast, HaahHJWY17} for the former and $\tildeTheta{10^{\nqubits}}$\footnote{Here $\tilde{O}$ hides a $\sqrt{\nqubits}$ factor.}~\cite{ADLY2025Paulinot,ADLY2025PauliOpt} for the latter.

In practice, people are often interested in some important properties of the state that they care about, and thus learning the full state could be a significant overshoot. This gives rise to shadow tomography~\cite{Aaronson20}, where the goal is to simultaneously learn the expectation values of a finite set of $\nobs$ observables. Remarkably, there exist algorithms for both fully entangled (e.g.,~\cite{Aaronson2018,Aaronson20,buadescu2021improved}) and single-copy measurements (e.g.,~\cite{huang2020predicting}) where the number of copies scales only logarithmically with the number of observables.

However, a major challenge for quantum computing is the presence of noise, which makes the learning task more difficult. 
Despite recent progress~\cite{google2023suppressing,acharya2024quantum} in error correction that improves the quality of quantum hardware, in principle, noise cannot be fully eliminated. 
Moreover, many non-physical factors, such as adversarial human factors, cannot be dealt with by error correction. 
When quantum computing improves, human adversaries would also possess more power to sabotage learning systems.
Thus, improving the robustness of quantum algorithms is still extremely important in the present and in the future.

Many works~\cite{yu2023robust,stilck2024efficient, brandao2020fast} have studied robustness to noise for quantum state learning, including various works on robust shadow tomography~\cite{ChenYuZengFlammia2020robustshadow,hu2025demonstration,Brieger2025stability}. 
However, they often require structural noise assumptions, such as Markovian, time-independence, and other geometric assumptions on the underlying noise channel. 
The complication of real-world physical environment could easily break these assumptions and make their algorithms invalid. 
For example, if the cooling system of superconducting quantum circuit becomes unstable for a small time window during the experiment, the noise that occurs during this time would break the time-independence assumption, and likely others as well.

To account for the unknown and unexpected nature of noise in the real-world, a recent work of~\cite{AliakbarpourBCL2025robustquantum} initiated the study of adversarial corruption model for full-state tomography in the single-copy measurement setting, where they assume that $\gamma$-fraction of the measurement outcomes can be arbitrarily and even adversarially corrupted. 
They prove that an error of $\tildeTheta{\gamma\sqrt{\rk}}$ in trace distance is necessary and achievable with sufficient number of copies to learn a rank-$\rk$ state.
One can indeed directly take their algorithm and apply it to shadow tomography, but it will be a significant overshoot, not only because shadow tomography is a easier problem, but also because their algorithm is extremely inefficient even for full-state tomography: the number of copies required is $O(\dims^{O(\log(1/\gamma))})$, not even polynomial in the dimensionality $\dims=2^\nqubits$.

Due to the significance of shadow tomography in practice and the prevalence of noise in real-world quantum systems, it is paramount to design efficient and robust algorithms for the problem. 
Surprisingly, an extremely simple modification to the classical shadows algorithm~\cite{huang2020predicting} not only provides provably optimal robustness guarantees against worst-case adversaries, but also makes no harm to the sample/copy efficiency. 
The theoretical results are supported by simulation experiments on classical computers. 
As a by-product, we nearly resolve the optimal sample complexity for robust full-state tomography via a reduction to shadow tomography.

\subsection{Adversarial corruption model for single-copy shadow tomography}
We begin with the formal problem formulation before introducing our results and techniques.
There are $\ns$ copies of an unknown $\nqubits$-state with density matrix $\rho\in \C^{\dims\times\dims}, \dims=2^{\nqubits}$ and a random seed $R$, which can be viewed as an infinite random binary string. We can apply measurements $\POVM^\ns = (\POVM_1, \ldots, \POVM_\ns)$ to each copy, where $\POVM_i=\{M_x^i\}_{x\in\cX}$. For $i\ge 1$ let $\out_i$ be the outcome of measuring the $i$th copy with $\POVM_i$. All measurements can be chosen based on the random seed $R$. Define $\out^t=(\out_0,\out_1, \ldots, \out_t)$. 

\begin{description}
\item[Shadow tomography.] Given an unknown state and $M$ observables $O_1, \ldots, O_M$ with $\opnorm{O_i}\le 1$, the goal is to design a measurement scheme $\POVM^\ns$ and output estimates $E_1, \ldots, E_M$ such that with probability at least $1-\delta$,
\[
\forall i\in[M], |E_i-\Tr[O_i\rho]|\le \eps.
\]
The probability is over the random seed $R$ and randomness in the measurement outcomes.

\end{description}

When the state is $\rho$, the distribution of $\out_i,i\ge 1$ is determined by Born's rule,
\begin{equation}
    \p_\rho^{i}(x)=\Tr[M_x^i\rho],\label{equ:distr-i-cond}
\end{equation}
For $1\le t\le \ns$, we further define $\p_\rho^{\out^t}$ as the distribution of $\out^t$ when the state is $\rho$. For non-adaptive measurements, $\p_\rho^{\out^t}$ is a product distribution conditioned on the random seed $R$. The measurements $\POVM^\ns$ can also be chosen adaptively, i.e., $\POVM_t$ depends on previous outcomes $x^{t-1}$. In this case, $\p_\rho^{\out^t}$ would not be a product distribution in general.

In practice, we may have restrictions on the types of measurements that can be applied. We use $\povmset$ to denote the set of allowable measurements for each copy. 

\paragraph{Adversarial corruption.} The noise model is defined as follows,
\begin{definition}[$\gamma$-adversarial corruption model]
    For non-adaptive measurement schemes, the interaction between the adversary and the algorithm proceeds as,
\begin{enumerate}
    \item Measurements are applied to all copies of $\rho$ to obtain outcomes $(x_1, \ldots, x_\ns)$.
    \item The adversary $\cA$ arbitrarily changes a $\gamma$-fraction of them. It has perfect knowledge about the measurements used by the algorithm and can perform the corruption based on that knowledge. 
    \item The algorithm uses the corrupted outcomes $(y_1, \ldots, y_\ns)$ to learn about the state $\rho$.
\end{enumerate}
\end{definition}

\subsection{Results}

We first prove a lower bound that tells the best achievable error even when the number of copies is infinite. Similar to robust full state tomography~\cite{AliakbarpourBCL2025robustquantum}, a dimension-dependent error is inevitable in the worst case under advsersarial corruption.
\begin{theorem}
\label{thm:result-lower}
    Under the $\gamma$-adversarial corruption model, there exists a set of observables $\Obs_1, \ldots, \Obs_{\nobs}$  such that all non-adaptive single-copy shadow tomography algorithms must incur an error of
    \[
    \eps = \tildeOmega{\gamma\min\{\sqrt{\nobs}, \sqrt{\dims}\}}.
    \]
\end{theorem}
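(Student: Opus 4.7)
The plan is to prove the lower bound by a two-point (Le Cam) indistinguishability argument. It suffices to exhibit a fixed observable set $\Obs_1,\ldots,\Obs_\nobs$ and two states $\rho_0,\rho_1$ (possibly drawn from a prior) such that (i) some observable has gap $|\tr[\Obs_i(\rho_1-\rho_0)]|=\tilde\Omega(\gamma\min\{\sqrt{\nobs},\sqrt{\dims}\})$, and (ii) for every non-adaptive single-copy measurement scheme, the $\gamma$-adversary can perturb a $\gamma$-fraction of outcomes so that the corrupted distributions under $\rho_0$ and $\rho_1$ are indistinguishable. Once both conditions hold, any algorithm with error below half the gap would contradict the information-theoretic indistinguishability of the two hypotheses.

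Following the Pauli-basis template from the full-state lower bound of \cite{AliakbarpourBCL2025robustquantum}, I would take $\rho_0=I/\dims$ and perturb in a uniformly random orthonormal direction chosen from a family $\{H_1,\ldots,H_{\nobs'}\}$ of traceless Hermitians with $\opnorm{H_j}\le 1$ and $\hsnorm{H_j}=\sqrt{\min(\nobs,\dims)}$, taking the observables to be $\Obs_j=H_j$. Natural candidates are the non-identity Pauli strings when $\nobs\ge\dims^2-1$; for $\nobs<\dims$ I would pick an $\nobs$-dimensional orthonormal subfamily supported on an $O(\nobs)$-sized block of Hermitians. Setting $\rho_1=\rho_0+c\gamma H_{j^\star}/\hsnorm{H_{j^\star}}$ for a uniformly random $j^\star$, with $c$ a small enough constant that $\rho_1\succeq 0$, produces a gap of $c\gamma\hsnorm{H_{j^\star}}=\tilde\Omega(\gamma\min\{\sqrt{\nobs},\sqrt{\dims}\})$ on the observable $\Obs_{j^\star}$.

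For any POVM $\{M_x\}$ the per-sample TV distance between the $\rho_0$- and $\rho_1$-outcome distributions equals $\tfrac{c\gamma}{2\hsnorm{H_{j^\star}}}\sum_x|\tr[M_xH_{j^\star}]|$. Averaging over $j^\star$ and combining Cauchy--Schwarz with the Parseval identity for the orthonormal family $\{H_j\}$ should yield $\mathbb{E}_{j^\star}[d_{\mathrm{TV}}]=O(\gamma)$ uniformly in the POVM; a coupling plus concentration argument then produces coupled outcome sequences under $\rho_0$ and $\rho_1$ that differ in at most a $\gamma$-fraction of positions with constant probability over $j^\star$, so the adversary can erase the difference and Le Cam yields the claimed lower bound. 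The main technical obstacle is this uniform TV bound: for rank-one projective measurements it follows from direct Pauli arithmetic, but general POVMs with many outcomes likely require a Khintchine-type inequality or an operator-Bernstein argument to avoid paying a factor of the outcome-alphabet size. A secondary subtlety is the $\nobs<\dims$ regime, where PSD-ness of $\rho_1$ and Frobenius normalization of the observables must be arranged simultaneously inside an $\nobs$-dimensional subspace; I expect the logarithmic factor hidden in $\tilde\Omega$ to arise from a union bound over that subspace.
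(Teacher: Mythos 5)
Your high-level plan matches the paper's: a one-versus-many hypothesis test with $\rho_0 = I/d$ against a random perturbation, a coupling-based adversary that needs to erase at most an $O(\gamma)$-fraction of outcomes, and a per-copy TV bound that must hold uniformly over POVMs. But the technical obstacle you flag is the real issue, and the remedy you propose (Parseval on an orthogonal family, or Khintchine/operator-Bernstein) would not close it. For HS-orthogonal traceless observables $\{H_j\}_{j\le M}$ with $\hsnorm{H_j}=\sqrt{\min(M,d)}$, Bessel's inequality gives only $\frac{1}{M}\sum_j \matdotprod{\psi}{H_j}{\psi}^2 \le \min(M,d)/M$ uniformly in $\psi$; what the argument actually needs is the much sharper $\tilde{O}(1/\min(M,d))$, and these agree only when $M \gtrsim d^2$. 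For $d \le M < d^2$ (already the main regime of interest) Parseval leaves a factor of $d^2/M$ on the table, and for $M < d$ it gives only $O(1)$. The paper does not rely on orthogonality at all: it invokes (a corrected version of) Lemma 5.6 of \cite{ChenCH021}, a concentration-plus-union-bound statement for Haar-random conjugates $O_i = U_i^\dagger Z U_i$, to obtain the uniform bound $\frac{1}{M}\sum_i\matdotprod{\psi}{O_i}{\psi}^2 \le O(\log M/M + \log d/d)$. The $\log$ factors in the theorem's $\tilde\Omega$ come from this union bound, not from a union bound over a subspace as you guessed.

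Two further steps in your sketch are fragile. First, when you average the TV distance over $j^\star$, the Cauchy--Schwarz step must be taken with respect to the probability weights $w_x$ (so that $\sum_x w_x = 1$ absorbs the sum over outcomes) rather than the counting measure; the paper achieves this automatically by going through $\chi^2$ divergence (then Pinsker and concavity of $\sqrt{\cdot}$), which factorizes the denominator cleanly: $\chisquare{\p_i}{\p_{\qmm}} = 9\eps^2\sum_x w_x \matdotprod{\psi_x}{O_i}{\psi_x}^2$. A direct Cauchy--Schwarz on $\sum_x|\Tr[M_x H_{j^\star}]|$ using $\sum_x\hsnorm{M_x}^2$ is uncontrolled. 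Second, in the $M<d$ regime your proposal confines both the observables and the perturbation to an $O(M)$-dimensional block with $\|H_j\|_{op}=1$ and $\hsnorm{H_j}=\sqrt{M}$. Then $\rho_1 = I/d + c\gamma H_{j^\star}/\sqrt{M}$ is PSD only when $\gamma \lesssim \sqrt{M}/d$, which is much smaller than the natural range $\gamma \lesssim 1/\sqrt{M}$ the theorem must cover. The paper avoids this by keeping the observables full-rank ($\pm1$ eigenvalues on all of $\C^d$, so $\hsnorm{O_i}=\sqrt{d}$) and encoding the $\min(M,d)$ dependence entirely through the randomized-observable concentration bound rather than through the geometry of the observable's support.
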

\begin{remark}
    By restricting the unknown state and observables to the same $r$-dimensional subspace, we can obtain a lower bound of $\eps = \tildeOmega{\gamma\min\{\sqrt{\nobs}, \sqrt{r}}.$
\end{remark}

We further design a robust shadow tomography algorithm that achieves the optimal error when $\nobs$ is large. Moreover, the copy complexity matches the standard algorithm by~\cite{huang2020predicting} that achieves the same error without adversarial corruption.
\begin{theorem}
\label{thm:result-algorithm}
    Under $\gamma$-adversarial corruption, there exists a shadow tomography algorithm such that for all observables $\Obs_1, \ldots, \Obs_{\nobs}$, achieves an error of at most $\eps=\bigO{\gamma{\max_{i\in[\nobs]}\hsnorm{O_i}}}$ with probability at least $1-\delta$. The number of copies used by the algorithm is at most
    \[
    \ns = \bigO{\frac{\max_{i\in[\nobs]}\hsnorm{O_i}^2}{\eps^2}\log\frac{\nobs}{\delta}}=\bigO{\frac{1}{\gamma^2}\log\frac{\nobs}{\delta}}.
    \]
\end{theorem}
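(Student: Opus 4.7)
The plan is to apply the classical shadows algorithm of \cite{huang2020predicting} with random Clifford measurements and aggregate the single-shot estimators using a median-of-means procedure whose group size is tuned to the corruption rate. Concretely, draw $n = \Theta(\log(\nobs/\delta)/\gamma^2)$ single-shot shadows $\hat\rho_1,\ldots,\hat\rho_n$, partition the indices into $K = \Theta(\log(\nobs/\delta))$ groups of size $n/K = \Theta(1/\gamma^2)$, form the per-observable, per-group sample means $\hat\mu_{i,j}$ of $\tr(\Obs_i\hat\rho_k)$, and return $E_i=\mathrm{median}_j\,\hat\mu_{i,j}$. The copy complexity then directly matches the bound claimed by the theorem.

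To analyse the error, first recall the HKP shadow guarantee: each single-shot estimator satisfies $\EE[\tr(\Obs_i\hat\rho_k)]=\tr(\Obs_i\rho)$ and $\Var[\tr(\Obs_i\hat\rho_k)]\le 3\hsnorm{\Obs_i}^2$. Hence an uncorrupted group mean has variance at most $3\gamma^2\hsnorm{\Obs_i}^2$, and Chebyshev gives $|\hat\mu_{i,j}-\tr(\Obs_i\rho)|\le c\gamma\hsnorm{\Obs_i}$ with probability at least $1-1/c^2$. I pick $c$ large enough that this probability strictly exceeds $1/2$ by a constant margin $\eta$. Since the adversary controls only $\gamma n=K/\gamma$ samples, even a fully concentrated attack can saturate at most $\gamma K$ groups; the remaining $(1-\gamma)K$ groups are clean, and by a Chernoff bound at least a $(1-\gamma)(1/2+\eta)$-fraction of them have group means in the target interval $[\tr(\Obs_i\rho)\pm c\gamma\hsnorm{\Obs_i}]$, with failure probability $e^{-\Omega(K)}$. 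For $\gamma$ bounded below $1/2$ this is a strict majority of the $K$ groups, so the median $E_i$ lies in the target interval; a union bound over the $\nobs$ observables gives total failure probability $\delta$ once $K=\Theta(\log(\nobs/\delta))$.

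The main obstacle is the adversary's dispersion strategy: a single corrupted shadow can produce $|\tr(\Obs_i\hat\rho_k)|$ of magnitude $\Theta(\dims\opnorm{\Obs_i})$, and averaging one such value into a group of $1/\gamma^2$ samples shifts that group mean by $\Theta(\dims\gamma^2)$, which may exceed the target $\Theta(\hsnorm{\Obs_i}\gamma)$ whenever $\Obs_i$ is far from full rank. The natural ``simple modification'' that resolves this is to clip each $\tr(\Obs_i\hat\rho_k)$ to $[-T,T]$ with $T=\Theta(\hsnorm{\Obs_i}/\gamma)$ before forming the group mean. Chebyshev on the uncorrupted distribution bounds the clipping bias by $O(\hsnorm{\Obs_i}\gamma)$, while the per-sample influence of any corruption is now capped at $2T/(n/K)=O(\hsnorm{\Obs_i}\gamma)$ times the corruption count in that group. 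Consequently a dispersed attack with $1/\gamma$ corruptions per group shifts each group mean by at most $O(\hsnorm{\Obs_i})$ over a negligible slice, and the concentrated attack is already absorbed by the $\gamma K$ bad-group allowance. In both extremes the median-of-means analysis from the previous paragraph goes through verbatim, yielding the claimed error $\eps=\bigO{\gamma\max_{i}\hsnorm{\Obs_i}}$ at the claimed sample complexity.
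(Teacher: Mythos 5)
The idea of switching to a robustified mean estimator is right, but the analysis has a real hole: with the clipping threshold you choose, the dispersed attack is fatal, and the concluding claim that ``both extremes go through verbatim'' is not justified.

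Concretely, you set $T=\Theta(\hsnorm{O_i}/\gamma)$ so that Chebyshev controls the clipping bias at $O(\gamma\hsnorm{O_i})$, and you observe that one corruption then shifts a clipped group mean by at most $2T/(n/K)=O(\gamma\hsnorm{O_i})$. But the adversary's total budget is $\gamma n = \Theta(K/\gamma)$ corruptions, which is enough to place $1/\gamma$ corruptions \emph{in every group}; each such group mean shifts by up to $(1/\gamma)\cdot 2T\gamma^2 = 2T\gamma = \Theta(\hsnorm{O_i})$, not $\Theta(\gamma\hsnorm{O_i})$. Since all $K$ groups can be shifted, the median also moves by $\Theta(\hsnorm{O_i})$, which is a factor $1/\gamma$ worse than the target. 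The phrase ``over a negligible slice'' does not save this; the shift in each group mean is what controls the final median. More fundamentally, Chebyshev alone creates an impossible tradeoff: controlling the clipping bias requires $T\gtrsim \hsnorm{O_i}/\gamma$, while controlling the per-group adversarial influence requires $T\lesssim \hsnorm{O_i}$. No threshold satisfies both.

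The missing ingredient, and the core of the paper's proof, is a higher-moment bound: for $v\sim\cD(\rho)$ and any Hermitian $O$, the $h$-th central moment of $\matdotprod{v}{O}{v}$ is at most $(Ch\hsnorm{O}/(d+1))^h$, proved via the hypercontractivity of the uniform POVM distribution (Lemma \ref{lemma-h-central-moment}). This says the shadow estimator is effectively sub-exponential with scale $\hsnorm{O_i}$, not merely variance-bounded, and it unlocks a threshold of roughly $T=\Theta(\log(1/\gamma)\hsnorm{O_i})$, with $h=\Theta(\log(1/\gamma))$, for which both the clipping bias and the dispersed-attack shift land at $O(\gamma\log(1/\gamma)\hsnorm{O_i})$. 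The paper avoids even the clip-and-MoM bookkeeping by feeding the moment bound into the standard robust-statistics result (Proposition 1.18 of \cite{diakonikolas2023algorithmic}, restated as Lemma \ref{lemma-robustness-estimation}): a $2\gamma$-trimmed mean over all $n$ samples achieves error $O(\sigma\gamma^{1-1/h})$ whenever the $h$-th central moment is $\sigma^h$, which with $h=\log(1/\gamma)$ gives $O(\gamma\log(1/\gamma)\hsnorm{O_i})$ directly. So your high-level plan (classical shadows plus a robustified aggregator) matches the paper, but without the higher-moment bound the analysis cannot close; Chebyshev is genuinely insufficient here.
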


By a reduction from full state tomography to shadow tomography, our shadow tomography algorithm leads to a sample optimal robust tomography algorithm. 
\begin{corollary}
    Under $\gamma$-adversarial corruption, there exists an full state tomography algorithm that achieves an error of $\eps = \tildeO{\gamma\sqrt{\rk}}$ in trace distance using $\ns= \tildeO{\dims\rk/\gamma^2}$ samples.
\end{corollary}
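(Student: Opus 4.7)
The plan is to reduce rank-$\rk$ full-state tomography to the shadow-tomography guarantee of Theorem~\ref{thm:result-algorithm} by exploiting the fact that the trace distance between two rank-$\rk$ density matrices is witnessed by a Hermitian observable of rank at most $2\rk$, whose Hilbert--Schmidt norm is $O(\sqrt{\rk})$ rather than $O(\sqrt{\dims})$. Via the bound $\bigO{\gamma\max_i \hsnorm{O_i}}$ from Theorem~\ref{thm:result-algorithm}, this will yield error $\bigO{\gamma\sqrt{\rk}}$ in trace distance.

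Concretely, I would first build an operator-norm net $\mathcal{N}$ of radius $1/4$ over the set of Hermitian contractions of rank at most $2\rk$. Standard volume arguments (a Grassmannian cover together with a spectrum cover) give $|\mathcal{N}|\le \exp(O(\dims\rk))$, and every $O\in\mathcal{N}$ satisfies $\hsnorm{O}\le\sqrt{2\rk}\,\opnorm{O}\le\sqrt{2\rk}$. Applying Theorem~\ref{thm:result-algorithm} to these observables produces estimates $\hat E_O$ with $|\hat E_O-\Tr[O\rho]|\le\eps_0 := C\gamma\sqrt{\rk}$ simultaneously for all $O\in\mathcal{N}$ with probability at least $1-\delta$, at copy cost $\ns=\bigO{\log(|\mathcal{N}|/\delta)/\gamma^2}=\tildeO{\dims\rk/\gamma^2}$. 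I would then take $\hat\rho$ to be any rank-$\le\rk$ density matrix satisfying $|\Tr[O\hat\rho]-\hat E_O|\le\eps_0$ for every $O\in\mathcal{N}$; feasibility holds because $\rho$ itself satisfies these constraints.

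To bound the trace distance, since $\rho$ and $\hat\rho$ both have rank $\le\rk$, their Hermitian difference has rank $\le 2\rk$, so there exists a Hermitian $U^\star$ with $\opnorm{U^\star}\le 1$, $\rank(U^\star)\le 2\rk$, and $\Tr[U^\star(\rho-\hat\rho)]=\|\rho-\hat\rho\|_1$. Picking $\tilde U\in\mathcal{N}$ with $\opnorm{U^\star-\tilde U}\le 1/4$ and applying H\"older's inequality,
\[
\|\rho-\hat\rho\|_1 \;\le\; |\Tr[\tilde U(\rho-\hat\rho)]| + \opnorm{U^\star-\tilde U}\,\|\rho-\hat\rho\|_1 \;\le\; 2\eps_0 + \tfrac14\|\rho-\hat\rho\|_1,
\]
giving $\|\rho-\hat\rho\|_1\le (8/3)\eps_0=\tildeO{\gamma\sqrt{\rk}}$. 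The main obstacle is the reconstruction step: the rank-$\rk$ constraint on $\hat\rho$ makes it a nonconvex feasibility problem, which is fine for an existence/sample-complexity claim but would need extra care for an efficient implementation (e.g.\ fitting any density matrix to the estimates and then projecting onto the nearest rank-$\rk$ density matrix, at the cost of at most a constant factor via the triangle inequality, since $\rho$ itself is rank-$\rk$).
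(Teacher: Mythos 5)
Your argument is correct and achieves the same error and copy complexity, but it takes a genuinely different route from the paper. The paper builds an $\eps/5$-cover $\mathcal{N}$ of the set of rank-$\le\rk$ \emph{states} (Lemma~\ref{lem:covering}, of size roughly $\exp(O(\dims\rk\log(\rk/\eps)))$), forms the Holevo--Helstrom observable $O_{i,j}$ for every \emph{pair} in the cover, applies \Cref{thm:result-algorithm} to these $|\mathcal{N}|^2$ observables, and outputs the covering state minimizing a Yatracos-style statistic $\max_{i}\bigl|\tr(O_{i,l}\rho_l)-E_{i,l}\bigr|$; the error bound then comes from a chain of triangle inequalities around that selection rule. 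You instead cover the candidate Helstrom \emph{witnesses} --- rank-$\le 2\rk$ Hermitian contractions --- with a fixed operator-norm net of radius $1/4$ (size $\exp(O(\dims\rk))$, with no $\log(1/\eps)$ dependence), output any rank-$\le\rk$ density matrix consistent with the estimates, and close the argument in one step by comparing the true Helstrom witness $U^\star$ of $\rho-\hat\rho$ to its nearest net element via H\"older. Your version is somewhat more self-contained (no Yatracos selection lemma, net independent of the target accuracy), while the paper's version gives a concrete output rule --- pick the best element of a finite cover --- that avoids a nonconvex rank-constrained feasibility search; both incur $\exp(O(\dims\rk))$ post-processing, which the paper explicitly accepts.

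One caveat on your final parenthetical: fitting an \emph{unconstrained} density matrix $\hat\rho'$ to the estimates and then projecting to the nearest rank-$\rk$ state does not directly rescue the analysis. Your trace-distance bound on $\|\rho-\hat\rho'\|_1$ requires $\hat\rho'$ to have rank $\le\rk$ so that the Helstrom witness of $\rho-\hat\rho'$ lies in the rank-$\le 2\rk$ set your net covers; for full-rank $\hat\rho'$ the witness can be full rank and there need be no nearby net element, so you cannot first bound $\|\rho-\hat\rho'\|_1$ and then invoke the triangle inequality. This does not affect your main proof, which correctly imposes the rank constraint on $\hat\rho$.
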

Our algorithm achieves the min-max optimal error lower bound of $\Omega(\gamma\sqrt{\rk})$ in~\cite{AliakbarpourBCL2025robustquantum} and the optimal sample complexity up to log factors: for $\eps=\gamma\sqrt
{\rk}$, $\ns=\Omega(\dims\rk^2/\eps^2)=\Omega(\dims\rk/\gamma^2)$ copies are necessary to learn an unknown state with $\eps$ accuracy  even without corruption using non-adaptive measurements~\cite{HaahHJWY17}. 
For $\rk=\dims$, it also matches the lower bound of $\Omega(\dims^3/\eps^2)$ for adaptive measurements~\cite{chen2023does} (note that the exact lower bound for rank-$r$ states in is unknown to date in the adaptive setting)
In contrast, the algorithm in~\cite{AliakbarpourBCL2025robustquantum} required pseudo-polynomial number of copies.

\subsection{Related works}

\paragraph{Shadow tomography} Under the promise that all  copies can be measured coherently, shadow tomography can be performed with a poly-logarithmic complexity in both the system dimension and the number of observables \cite{Aaronson2018,Aaronson20,buadescu2021improved}. For protocols restricted to incoherent measurements, the classical shadows framework \cite{Paini2019Oct,Morris2019Sep,huang2020predicting,Neven2021Oct,Bertoni2024Jul,Fanizza2024Jun,Helsen2023Dec,Akhtar2023Jun,Wan2023Dec,Low2022Aug} provides efficient estimation for specific observable families. However, poly-logarithmic complexity in the dimension is not possible in general \cite{huang2020predicting}. 

\cite{ChenYuZengFlammia2020robustshadow,hu2025demonstration} propose robust shadow tomography protocols in which measurement noise can be mitigated by a calibration procedure that estimates the inverse of the noisy measurement channel. 
Our work considers a different and more challenging noise model, where a fraction of the measurement outcomes are completely corrupted. These corrupted outcomes are not merely a noisy version of the true measurement (e.g., a depolarized result) but may be entirely arbitrary or even adversarially chosen. This distinction also holds under general gate-dependent, time-stationary, and Markovian noise \cite{Brieger2025stability} and more generally the non-iid assumption \cite{Fawzi2024Nov}. Consequently, while the protocols in the aforementioned works can achieve an arbitrary precision $\eps$, our protocol's precision is lower-bounded by an expression that depends on the corruption fraction and other problem parameters.

\paragraph{State tomography} When coherent measurements are permitted, the complexity of state tomography is $ \widetilde{\Theta}(dr/\eps^2)$ \cite{o2016efficient,Haah2016Jun}, while it becomes $\widetilde{\Theta}(dr^2/\eps^2)$ for non-adaptive incoherent, single-copy measurements \cite{Kueng2017Jan,guctua2020fast,Haah2016Jun}; here, $\eps$ represents the approximation error and $r$ is the rank of the unknown state. For algorithms using adaptive incoherent, single-copy measurements, only the lower bound $\Omega(d^3/\eps^2)$ in the worst case  is known  \cite{chen2023does}. These algorithms work in the uncorrupted model. 

\paragraph{Learning with noise}
There is a large body of work on learning quantum states in the presence of noise. Many requires specific statistical or structural assumptions on the noise
\cite{yu2023robust,stilck2024efficient,rambach2021robust,brandao2020fast,endo2018practical, cai2023quantum}.
\cite{Jayakumar2024universalframework} studied learning the state and noise model simultaneously, but due to the inherent ambiguity, they can only be learned up to an ambiguity factor.

Agnostic learning~\cite{chung2018sample,buadescu2021improved} aims to learn a closest state in a specific concept class, such as stabilizer states~\cite{chen2024stabilizer} and product states~\cite{bakshi2024learning, grewal2024agnostic}. One can view this setup as having independent and identical noise that deviates the true state from the ideal concept class.

The first tomographic algorithm in the adversarial corruption model is proposed by \cite{AliakbarpourBCL2025robustquantum} and achieves an accuracy $\eps = O(\gamma \sqrt{r})$ and has a copy complexity $d^{O(\log 1/\gamma)} /\gamma^2$ where $r$ is the state's rank and $\gamma$ is the corruption parameter. In this work, we improve on the copy complexity by proposing a new algorithm requiring only $\widetilde{O}(dr^2/\eps^2)$ copies, which is optimal up to logarithmic factors. 
A later but concurrent work \cite{Arulandu2025Oct} uses robust statistics to design an efficient algorithm to learn  quantum states close to product, which also works under adversarial noise. These two works extend the study of learning quantum states to the worst-case adversarial noise setting.

\paragraph{Classical robust statistics} Robust statistics~\cite{huber1964robust,lugosi2021robust} is an active area of research that deals with statistical problems under noise models. A large and growing literature has made substantial progress on designing time-efficient algorithms for key problems, including mean estimation ~\cite{CDG17,DiakonikolasKP20,7782980,Hopkins2020robustheavy,hopkins2025subGmean}, learning mixtures of Gaussians~\cite{BakshiDHKKK20robustSoS,Liu2023robustmixture}, and covariance estimation~\cite{kothari2018robust,BakshiDHKKK20robustSoS,ilias2024SoSsubG}. In our results, a dimension dependent error is unavoidable in the worst case setting. This phenomenon was observed in a different classical context:  distributed estimation under local differential privacy and communication constraints~\cite{acharya21manipulation}.

\section{Technical overview}

\subsection{Failure of existing algorithms under adversarial noise}\label{sec:tomography-overview}
\paragraph{The naive algorithm} Perhaps the  simplest algorithm for shadow tomography is to split the copies evenly into $\nobs$ groups and directly measure the expectation values of each observable in each group. 
If we normalize the observables such that $0\preceq O_i\preceq \eye_\dims$, this means that we apply the measurement $\{O_i, \eye_\dims-O_i\}$ for each group of $\nspu/\nobs$ copies and approximate $\Tr[\rho O_i]$ by the empirical mean. 
However, it is straightforward to see that this algorithm would fail drastically under adversarial corruption: the adversary only needs to pick one observable, say $O_1$, and flip some $\gamma'$ fraction of the binary outcomes. Our estimate for $O_i$ would shift by $\gamma'$ (and no more than that),  but only $\gamma'/\nobs$ fraction of \emph{all} the samples are corrupted. 
Equivalently, this means that corrupting $\gamma$ fraction of the measurement outcomes can lead to an error of $\gamma \nobs$ in the worst case.

One may speculate that the failure is because empirical mean estimators are generally not robust to noise. However, we show that the $\gamma M$ lower bound holds generally for all strategies that measure each copy with $\{O_i, \eye_\dims-O_i\}$. Thus, this additional error is a fundamental limit and cannot be improved by using other post-processing approaches such as quantile or median-of-means estimators. The lower bound is proved in~\cref{sec:lb-direct} its ideas are explained in~\cref{sec:overview-lb}.
Our discussion leads to the following theorem for algorithms using $\{O_i, \eye_\dims-O_i\}$. 
\begin{theorem}
\label{thm:naive}
    Under $\gamma$-adversarial corruption, the algorithm that directly measures each observable $O_i$ using $\ns/\nobs$ copies and outputs their empirical means achieves an error of $\eps = O(\min\{\gamma \nobs, 1\})$ using $\ns = \nobs\log(\nobs/\delta)/\eps^2$ copies. This is optimal in the sense that all algorithms that directly measure each copy with one of $\{O_i, \eye_\dims-O_i\}$ must incur an error of $\eps=\Omega(\gamma \nobs)$.
\end{theorem}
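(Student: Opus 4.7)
The plan is to treat the upper bound and the matching lower bound separately, since they require very different techniques (a direct Hoeffding calculation versus a Le Cam two-point reduction).

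\textbf{Upper bound.} I would decompose the error of the naive algorithm additively into a sampling term and an adversarial-shift term. Normalize so that $0\preceq O_i\preceq\eye_\dims$; then the POVM $\{O_i,\eye_\dims-O_i\}$ produces Bernoulli outcomes with mean $\Tr[O_i\rho]$. On the group of $\ns/\nobs$ copies allocated to $O_i$, Hoeffding plus a union bound over $i\in[\nobs]$ show that every empirical mean is within $\eps$ of the truth with probability $1-\delta$ as soon as $\ns=\Theta(\nobs\log(\nobs/\delta)/\eps^2)$. The adversary has budget $\gamma\ns$; even if placed entirely inside a single group of size $\ns/\nobs$, it can shift that empirical mean by at most $\gamma\ns/(\ns/\nobs)=\gamma\nobs$, and in any case the clipped estimate remains in $[0,1]$. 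Summing the two contributions yields final error $O(\eps+\min\{\gamma\nobs,1\})$, which collapses to the claimed $O(\min\{\gamma\nobs,1\})$ once $\eps=\Theta(\min\{\gamma\nobs,1\})$.

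\textbf{Lower bound.} For the matching $\Omega(\gamma\nobs)$ bound inside the restricted class of direct-measurement strategies, I would use a Le Cam two-point reduction combined with a pigeonhole on the measurement assignment. Any such strategy is specified by counts $\{n_i\}_{i\in[\nobs]}$ with $\sum_i n_i=\ns$, so some index $i^\star$ must satisfy $n_{i^\star}\le \ns/\nobs$. Taking $O_i=\proj{i}$ in a fixed orthonormal basis (in the regime $\nobs\le\dims-1$), I would choose the hypotheses
\[
\rho_0=\proj{\dims},\qquad \rho_1=(1-\Delta)\proj{\dims}+\Delta\proj{i^\star},\qquad \Delta=c\gamma\nobs,
\]
for a small absolute constant $c$. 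These agree on every $O_j$ with $j\ne i^\star$ (both give $0$) and differ by $\Delta$ on $O_{i^\star}$. Under $\rho_1$, the outcomes in group $i^\star$ are $\mathrm{Bern}(\Delta)$; the adversary flips all $1$-outcomes back to $0$, costing $\Delta n_{i^\star}\le c\gamma\ns$ flips in expectation and staying within the total budget $\gamma\ns$ with probability $1-o(1)$ by a one-sided Chernoff bound. After corruption the observed distributions under $\rho_0$ and $\rho_1$ coincide, so no estimator can decide between them, forcing any estimate of $\Tr[O_{i^\star}\rho]$ to incur error at least $\Delta/2=\Omega(\gamma\nobs)$ on at least one of the two states.

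\textbf{Main obstacle.} The delicate step is making the budget accounting rigorous when $\nobs\ge\dims$, since the clean ``spare'' basis vector $\qbit{\dims}$ no longer exists. I would patch this by anchoring $\rho_0$ at a maximally mixed state on a convenient subspace and spreading the perturbation so that each non-target group deviates by only $O(\Delta/\nobs)$; summing the corruption cost over all $\nobs$ groups still yields $O(\Delta\ns/\nobs)\le\gamma\ns$, preserving the $\Omega(\gamma\nobs)$ separation on $O_{i^\star}$. Randomized allocations are handled by applying the pigeonhole to the expected counts $\EE[n_i]$ and Markov's inequality to guarantee $n_{i^\star}\le O(\ns/\nobs)$ with constant probability, absorbing an extra constant into $c$. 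Finally, the degenerate regime $\gamma\nobs\ge 1$ is covered by the trivial constant estimator, which matches the $\min\{\gamma\nobs,1\}$ upper bound and leaves nothing to prove.
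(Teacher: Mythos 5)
Your lower-bound route is genuinely different from the paper's, and both work. The paper proves the $\Omega(\gamma\nobs)$ claim by plugging a one-versus-many ensemble $H_0:\qmm$ vs.\ $H_1:\sigma_i=(\eye_\dims+3\eps O_i)/\dims,\,i\sim[\nobs]$ with \emph{Pauli} observables into the earth-mover-distance machinery of \cref{lem:emd,lem:emd-tv-distance}: for direct measurement $\POVM_j$ one computes $\totalvardist{\p_i}{\p_\qmm}=3\eps\indic{i=j}$ exactly, so averaging over $i$ gives $3\eps/\nobs$ per copy \emph{regardless of the allocation}, and summing over copies yields $\mathrm{EMD}\le 3\ns\eps/\nobs$; \cref{lem:emd} then forces $\gamma\le 6\eps/\nobs$. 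Your approach is a Le Cam two-point reduction with an explicit pigeonhole on the allocation and an explicit flip-all-ones adversary. What the paper's route buys: (i) the allocation is handled automatically by the averaging over the ensemble, so there is no pigeonhole and no separate treatment of randomized allocations; (ii) the Pauli construction needs only $\nobs\le\dims^2-1$, so the case $\nobs\ge\dims$ that you flag as the ``main obstacle'' never arises and no ad hoc spreading patch is needed; (iii) the budget-in-expectation issue is already absorbed into the cited EMD lemma (itself a coupling-plus-Markov argument), so it does not have to be re-proved with Chernoff. What your route buys: it is elementary and self-contained, it makes the source of the $\ns/\nobs$ factor visible, and it produces the worst-case instance explicitly rather than in distribution. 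One small notational slip: with the paper's convention the basis vectors are $\qbit{0},\dots,\qbit{\dims-1}$, so you should anchor $\rho_0$ at a basis vector outside $\{\qbit{1},\dots,\qbit{\nobs}\}$ (e.g.\ $\qbit{0}$), and this is exactly why you need $\nobs\le\dims-1$ in the core argument. Your upper-bound decomposition into a Hoeffding/union-bound sampling term plus a $\gamma\ns\cdot(\nobs/\ns)=\gamma\nobs$ adversarial-shift term matches the paper's (informal) accounting and is correct.
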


\paragraph{Classical shadows by~\cite{huang2020predicting}} We review the shadow tomography algorithm via classical shadows in~\cite{huang2020predicting}. We apply the uniform POVM to all copies, 
\[
\POVM_{unif}=\{\dims\qproj{v}\}_{v\in \Sp^{\dims} }.
\]
Each $\qbit{v}$ is drawn uniformly from the Haar measure, i.e., a uniform distribution over the complex unit sphere.
When applying the measurement to $\rho$, by Born's rule, we obtain a vector $\qbit{v}$ as an outcome which occurs with a probability density of 
\begin{equation}
\label{equ:uniform-povm-density}
    \dims\matdotprod{v}{\rho}{v}\dm v.
\end{equation}
We denote this distribution as $\cD(\rho)$. Using \eqref{equ:haar-k-moment}, the expectation of $\qproj{v}$ can be written in terms of the unknown state $\rho$,
\begin{equation}
    \Sigma_\rho \eqdef \expectDistrOf{v\sim\cD(\rho)}{\qproj{v}} =\int\qproj{v}\matdotprod{v}{\rho}{v}\dm v= \frac{\eye_\dims+\rho}{\dims+1}.
    \label{equ:sigma-rho}
\end{equation}
Therefore, the expectation value of an observable $\Obs_i$ can be computed as follows,
\begin{equation}
\label{equ:classical-shadow-mean}
    \Tr[\Obs_i\rho]=(\dims+1)\expectDistrOf{v\sim\cD(\rho)}{\matdotprod{v}{\Obs_i}{v}}-\Tr[\Obs_i].
\end{equation}

By measuring all copies with the uniform POVM, we obtain a set of outcomes $\qbit{v_1}, \ldots, \qbit{v_\ns}$. For each observable $\Obs_i$, we can then approximate the above quantity using the empirical mean of $\matdotprod{v_j}{O_i}{v_j}$.
The algorithm in~\cite{huang2020predicting} makes a slight modification and instead uses median-of-means (MoM): we divide $\ns$ samples into  batches of equal size $K$, compute the mean of each batch, and return the median of the $\ns/K$ empirical means. 
This strategy indeed offers some level of robustness when only a small fraction of batches are corrupted, which is better than the plain empirical mean estimator. 

Unfortunately, it would \emph{not} be robust in the worst case, where most of the batches could be corrupted. To ensure an accurate estimate, the batch size $K$ need to be sufficiently large, which makes the empirical mean of each batch vulnerable to corruptions.
A hand-wavy but intuitive argument goes as follows: since each $\matdotprod{v_j}{O_i}{v_j}$ is between $-1$ and $1$, and the $(\dims+1)$ factor in~\eqref{equ:classical-shadow-mean} amplifies the fluctuation, the empirical mean of each batch could shift by roughly $\gamma\dims$ if $\gamma$-fraction of the samples in the batch are corrupted. 
If at least $3/4$ of the batches are affected, and the fluctuations are in the same direction, the final median output would also shift by $\gamma\dims$.
We prove that this can indeed happen for a simple yet practical problem of fidelity estimation, an important application of shadow tomography. 
\begin{theorem}
\label{thm:lb-classical-shadow}
    Consider the fidelity estimation problem where we only have one observable $O=\qproj{\psi}$ for some pure state $\qbit{\psi}$, and the unknown state $\rho=\eye_\dims/\dims$ is the maximally mixed state. Let $K$ be the batch size for the MoM estimator. For $\ns$ sufficiently large, $\gamma\ge 1/K$, and $\dims\ge 1000$, there exists an adversary that corrupts $O(\gamma)$ fraction of the outcome vectors $\qbit{v_1}, \ldots, \qbit{v_\ns}$ and causes an error of at least $\Omega(\gamma\dims)$ for the MoM estimator~\cite{huang2020predicting}.
\end{theorem}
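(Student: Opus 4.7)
The plan is to exhibit an explicit non-adaptive adversary that spreads $\lceil \gamma K\rceil$ corruptions across \emph{every} batch, each time replacing the outcome vector by $\qbit{\psi}$, which shifts every batch mean upward by $\gamma d$ and drags the median along with it. Since $\rho = I_d/d$, the Born density in \eqref{equ:uniform-povm-density} is identically $1$, so the honest outcomes $\qbit{v_1},\ldots,\qbit{v_n}$ of the uniform POVM are i.i.d.\ Haar. The per-sample classical-shadow estimator is $\hat{o}_j = (d+1)|\qdotprod{v_j}{\psi}|^2 - 1$, which, using the Haar moment $\mathbb{E}[|\qdotprod{v}{\psi}|^4] = 2/(d(d+1))$, satisfies $\mathbb{E}[\hat{o}_j] = 1/d = \Tr[O\rho]$ and $\Var[\hat{o}_j] = 1 - 1/d^2 < 1$. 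Crucially, $\hat{o}_j \in [-1, d]$, with the maximum $d$ attained exactly when $\qbit{v_j} = \qbit{\psi}$.

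Fix a batch $B$ of size $K$ and let the adversary replace $\lceil\gamma K\rceil$ of the outcomes in $B$ by $\qbit{\psi}$; since $\gamma K \ge 1$ this is at most $2\gamma n$ corruptions in total, which fits the $\gamma$-budget up to a constant. The corrupted batch mean becomes
\[
\bar{o}_B \;=\; \frac{\lceil\gamma K\rceil\, d}{K} + \frac{1}{K}\sum_{\text{uncorrupted } j \in B} \hat{o}_j \;\ge\; \gamma d + \hat{\mu}_B^{\text{unc}},
\]
where $\hat{\mu}_B^{\text{unc}}$ is independent of the adversary, has mean $(1-\gamma)/d$, and has variance at most $1/K$. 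Chebyshev's inequality therefore yields $\Pr[\bar{o}_B \ge \gamma d - 2/\sqrt{K}] \ge 3/4$. The $B = n/K$ batches are mutually independent, so a Chernoff bound on this event guarantees that, for $n$ sufficiently large, strictly more than half of the batch means exceed $\gamma d - 2/\sqrt{K}$ with probability $1 - e^{-\Omega(B)}$. The median inherits this lower bound, producing MoM error at least $\gamma d - 2/\sqrt{K} - 1/d = \Omega(\gamma d)$; here the final step uses $\gamma K \ge 1$ together with $d \ge 1000$ to ensure $\gamma d$ dominates the residual $2/\sqrt{K}$ and $1/d$ (the implicit regime being $K \le O(d^2)$).

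The main technical obstacle is that $\hat{o}_j$ is not bounded on a scale small compared to the adversarial shift: individual samples can spike up to $d$, so naive Hoeffding-type concentration of the uncorrupted batch means would be too weak. The resolution is to settle for constant-probability concentration per batch via the variance bound $\Var[\hat{o}_j] < 1$ and Chebyshev, and then exploit the independence across batches to Chernoff-amplify over the built-in MoM structure. The only remaining bookkeeping is confirming that $\gamma d$ dominates both $1/\sqrt{K}$ and $1/d$ in the stated parameter regime, which follows immediately from $\gamma K \ge 1$ and $d \ge 1000$.
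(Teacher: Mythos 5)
Your proof is correct and arrives at the same conclusion, but it is a genuinely different route through the same terrain. The paper's proof bounds $|\qdotprod{v}{\psi}|^2$ using Levy's lemma to show that a $0.99$ fraction of outcomes lie below $1/3$, has the adversary replace the \emph{smallest-overlap} vectors in each batch by $\qbit{\psi}$ (so the per-sample increase is deterministically at least $(d+1)\cdot\frac{2}{3}$ for the targeted samples), argues that $0.8$ of the batch means rise by at least $\gamma d/3$, and then bounds the new median $E'$ relative to the \emph{uncorrupted} median $E$ together with the fact that $|E-1/d|$ is small for $n$ large. You instead compute $\Var[\hat{o}_j]<1$ exactly from the Haar fourth moment, have the adversary replace a \emph{fixed} set of $\lceil\gamma K\rceil$ positions per batch, and use Chebyshev per batch plus a Chernoff amplification over the $B=n/K$ batches to lower-bound the corrupted median \emph{directly} against the truth $1/d$. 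What your approach buys is an exact moment bound in place of a qualitative Levy-type tail estimate, and a cleaner, fully explicit concentration chain; what the paper's choice of targeting the smallest-overlap samples buys is a deterministic per-batch shift that sidesteps the per-batch Chebyshev step and cleanly identifies the shift as additive on top of $E$.

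Two small caveats on your write-up. First, you drop the positive $(1-\gamma)/d$ mean of $\hat{\mu}_B^{\mathrm{unc}}$ when passing from $\bar{o}_B \ge \gamma d + \hat{\mu}_B^{\mathrm{unc}}$ to the threshold $\gamma d - 2/\sqrt{K}$; keeping it gives error at least $\gamma(d-1/d) - 2/\sqrt{K}$, which removes the separate $1/d$ residual and simplifies the final bookkeeping to the single condition $2/\sqrt{K}\lesssim\gamma d$. Second, that condition does not ``follow immediately from $\gamma K\ge 1$ and $d\ge 1000$'' in isolation — with $\gamma = 1/K$ it is equivalent to $K\lesssim d^2$, exactly the implicit regime you flag parenthetically. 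That is fine to note honestly, since the paper's step ``$|E-\Tr[O\rho]|\le\gamma d/12$ when $n$ is sufficiently large'' encodes the same hidden requirement that the batch-mean scale $\Theta(1/\sqrt{K})$ be dominated by $\gamma d$; both proofs are on the same footing here.
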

\begin{proof}
    When $\rho=\eye_\dims/\dims$, $\qbit{v}$ is exactly drawn from the Haar measure, and thus $\expect{\matdotprod{v}{O}{v}}=\expect{|\qdotprod{v}{\psi}|^2}=1/\dims$. By Levy's lemma,  for $\dims\ge 1000$,
    \[
    \probaOf{|\qdotprod{v}{\psi}|^2\le 1/3}\ge 0.99.
    \]
    Using Chernoff bound, for $\ns$ sufficiently large, $|\qdotprod{v_i}{\psi}|^2\le 1/3$ for at least 0.8 fraction of the $\qbit{v_i}$'s  with probability at least 0.99. In each batch, as long as $\gamma\ge 1/K$ where $K$ is the batch size, we can change $\gamma$-fraction of the vectors $\qbit{v_i}$ with smallest $|\qdotprod{v_i}{\psi}|^2$ to $\qbit{\psi}$, and thus the new value of $|\qdotprod{v_i}{\psi}|^2=1$. Therefore, the empirical mean of at least $0.8$ fraction of the batches increases by at least $\gamma\dims/3$, meaning that the median of the means $E'$ would also increase by at least the same value compared to the original MoM estimate $E$. Since error for the original estimate satisfies $|E-\Tr[O\rho]|\le \gamma\dims/12 $ when $\ns$ is sufficiently large, by triangle inequality the new estimate satisfies $E'\ge \Tr[O\rho]+\gamma\dims/4$, completing the proof.

    We note that the requirement on $\gamma\ge 1/K$ is very mild: for fidelity estimation, the batch size $K=\ns/\log(1/\delta)$ as suggested by \cite{huang2020predicting} where $\delta$ is the desired success probability to achieve a near-optimal error of $\sqrt{\log(1/\delta)/\ns}$.
\end{proof}

\begin{remark}
    One may argue that the real implementation of uniform POVM involves first sampling a unitary $U_i$ and then applying the corresponding basis measurement, thus the vector $\qbit{v_i}$ can only be chosen from the basis of $U_i$. However, in the adversarial setting, we argue that the random seed used to sample $U_i$ could be altered by the adversary, which would change $\qbit{v_i}$ arbitrarily. This captures potential errors in the sampling process, which is also extremely relevant in practice.
\end{remark}

\subsection{Our algorithm}
To obtain an adversarially robust algorithm, we make a very simple yet effective change to the classical shadows algorithm: instead of median-of-means, we use the truncated mean as the output. The error guarantee of the algorithm is well-known in robust statistics,
\begin{algorithm}
\caption{\texttt{TruncatedMean}($\bx, \gamma$)}
\label{alg:truncated-mean}
    \begin{algorithmic}
        \State \textbf{Input:} potentially corrupted samples $\bx =(x_1, \ldots, x_\ns)\in\R$, parameter $\gamma$.
        \State Ignore the largest and smallest $\gamma$-fraction of the samples. 
        \State\Return $\hat\mu=$the empirical average of the remaining samples.
    \end{algorithmic}
\end{algorithm}

\begin{theorem}[Section 1.4.2, \cite{diakonikolas2023algorithmic}]
\label{thm:truncated-mean-error}
    Let $X\in\R$ be a real-valued random variable with mean $\mu$ with $\expect{|X-\mu|^h}\le\sigma^h$. Given $\gamma$-corrupted samples $\bx = (x_1, \ldots, x_\ns)$ from $X$, \texttt{TruncatedMean}($\bx, 2\gamma$) (\cref{alg:truncated-mean}) outputs $\hat{\mu}$ such that $|\hat{\mu}-\mu|=O(\sigma\gamma^{1-1/h})$ with probability $1-\delta$ for some $\ns=O(\log(1/\delta)/\gamma^2)$.
\end{theorem}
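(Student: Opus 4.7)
The plan is to decompose $|\hat\mu - \mu|$ by a triangle inequality through two intermediate quantities: the mean of $X$ conditioned on lying in the empirical truncation window, and the empirical mean of the \emph{clean} samples inside that window. The argument then splits into three pieces: controlling the truncation window, bounding the population truncation bias, and combining empirical fluctuations with the adversary's shift. Denote the $\alpha$-quantile of $X$ by $q_\alpha$; Markov's inequality applied to $|X-\mu|^h$ yields $|q_\alpha - \mu|, |q_{1-\alpha}-\mu| \le \sigma/\alpha^{1/h}$, so the interval $I^* := [q_{\gamma/2}, q_{1-\gamma/2}]$ has length $W = O(\sigma/\gamma^{1/h})$.

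First I would localize the truncation window. A standard DKW/VC uniform-convergence bound shows that, for $\nspu = \Omega(\log(1/\delta)/\gamma^2)$, with probability $1-\delta/2$ the empirical CDF of the \emph{clean} (pre-corruption) samples agrees with $F$ up to additive $\gamma/4$ on every half-line. Because the adversary can alter at most $\gamma \nspu$ samples, the empirical $(2\gamma)$- and $(1-2\gamma)$-quantiles $\hat q_-, \hat q_+$ of the corrupted data are sandwiched between the true $\gamma/2$- and $\gamma$-quantiles on the low side and between the $(1-\gamma)$- and $(1-\gamma/2)$-quantiles on the high side. Hence the truncation window $\hat I := [\hat q_-, \hat q_+]$ satisfies $\hat I \subseteq I^*$ and $|\hat I| \le W$, pinning down the range over which the algorithm averages.

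Next I would bound the population bias and the empirical contributions. Writing $\mu_{\hat I} := \mathbb{E}[X \mid X \in \hat I]$ and using $\Pr[X \notin \hat I] \le 4\gamma$ together with Hölder's inequality,
\[
\left|\mathbb{E}\bigl[(X-\mu)\mathbbm{1}(X \notin \hat I)\bigr]\right| \le \|X-\mu\|_h \cdot \Pr[X \notin \hat I]^{1-1/h} = O(\sigma \gamma^{1-1/h}),
\]
which combined with $\Pr[X \in \hat I] \ge 1/2$ gives $|\mu_{\hat I} - \mu| = O(\sigma \gamma^{1-1/h})$. Conditional on $X \in \hat I$, the variable $X$ is supported in an interval of length $W$, so by Hoeffding the empirical mean of the $(1-O(\gamma))\nspu$ clean samples that land in $\hat I$ is within $O(W \sqrt{\log(1/\delta)/\nspu}) = O(\sigma \gamma^{1-1/h})$ of $\mu_{\hat I}$ with probability $1-\delta/2$. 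Finally the adversary can modify at most $\gamma \nspu$ kept values, each by at most $W$ in absolute value, producing an extra shift of $\gamma W / (1 - O(\gamma)) = O(\sigma \gamma^{1-1/h})$ in the reported mean. Summing the three error contributions via triangle inequality yields the claim.

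The main obstacle is the first step: one must rule out the adversarial strategy of concentrating corruptions near the true quantiles in order to drag the empirical $2\gamma$-quantile deep into the tail. This is precisely why the algorithm is run with parameter $2\gamma$ rather than $\gamma$, and where the sample-size condition $\nspu = \Omega(\log(1/\delta)/\gamma^2)$ is used: uniform convergence of the \emph{clean} empirical CDF leaves the adversary only $\gamma$ fraction of ``wiggle room,'' which is insufficient to push a $2\gamma$-empirical quantile past the $\gamma/2$-true quantile. Once the truncation window is pinned inside $I^*$, steps (ii) and (iii) reduce to routine robust-statistics calculations that exploit the fact that the truncated support has width $O(\sigma/\gamma^{1/h})$.
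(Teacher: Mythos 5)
Your proposal takes a genuinely different route from the paper's. The paper's argument is modular: it proves a moment-to-stability lemma (\cref{lemma-bounded-central-moment}, via tail integration) showing that removing any $\xi$-mass from $\cD$ shifts the mean by at most $O(\sigma\xi^{1-1/h})$, and then invokes Proposition~1.18 of \cite{diakonikolas2023algorithmic} (\cref{lemma-robustness-estimation}) as a black box to conclude that, under such stability, the $2\gamma$-truncated mean of a $\gamma$-corrupted sample of size $\ns=\Omega(\log(1/\delta)/\gamma^2)$ is accurate. You instead give a self-contained analysis that essentially reproves the black-boxed proposition from scratch: localize the truncation window via DKW, bound the population truncation bias via H\"older, and separately bound the sampling fluctuation and the adversarial shift inside the window. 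Your route is more explicit (and in fact mirrors how Proposition~1.18 itself is proved); the paper's is shorter because it outsources the hard part to the textbook.

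There is, however, a real gap in your Hoeffding step. You compare the empirical mean of the clean samples that land in $\hat I=[\hat q_-,\hat q_+]$ to the conditional mean $\mu_{\hat I}=\EE[X\mid X\in\hat I]$ and invoke Hoeffding. But $\hat I$ is data- and adversary-dependent: the same clean samples you are averaging over helped determine $\hat q_\pm$, and the adversary places its corruptions with full knowledge of the clean draws. So, conditional on $\hat I$, the clean samples inside $\hat I$ are \emph{not} i.i.d.\ from $\cD\mid\hat I$, and Hoeffding does not apply as stated. Two standard repairs both preserve your claimed bounds: (a) anchor the comparison to the \emph{fixed} interval $I^*=[q_{\gamma/2},q_{1-\gamma/2}]$ --- the clean samples in $I^*$ \emph{are} conditionally i.i.d.\ from $\cD\mid I^*$, Hoeffding concentrates their empirical mean around $\mu_{I^*}$, and since $\hat I\subseteq I^*$ with $\Pr[X\in I^*\setminus\hat I]=O(\gamma)$, switching the averaging set from $I^*$ to $\hat I$ changes only $O(\gamma\ns)$ terms, each of magnitude $O(W)$, contributing another $O(\gamma W)=O(\sigma\gamma^{1-1/h})$; or (b) prove a uniform VC-type deviation bound over the two-parameter family of sub-intervals of $I^*$. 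As written the Hoeffding step is circular, and one of these fixes is needed to make the argument go through.
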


For each observable $\Obs_i$, we simply apply the truncated mean estimator to estimate $\expectDistrOf{v\sim\cD(\rho)}{\matdotprod{v}{\Obs_i}{v}}$. Because the Haar measure is well concentrated, similar to hyper-contractivity proved in~\cite{AliakbarpourBCL2025robustquantum}, we can bound the higher order central moments of $\matdotprod{v}{O}{v}$ when $\qbit{v}\sim \mathcal{D}_{\rho}$ for all Hermitian matrices $O$,
\begin{theorem}
\label{thm:uniform-povm-moment-bound}
    Let $\qbit{v}\sim \mathcal{D}_{\rho}$ with probability density defined in~\eqref{equ:uniform-povm-density}. For all $h\ge 2$ and all Hermitian matrix $O\in \C^{\dims\times\dims}$, there exists a universal constant $C$ such that
    \[
    \expectDistrOf{v\sim\cD(\rho)}{(\bra{v}O\ket{v}-\expectDistrOf{v\sim\cD(\rho)}{\bra{v}O\ket{v}})^h}\leq \Paren{\frac{Ch\hsnorm{O}}{d+1}}^h.
    \]
\end{theorem}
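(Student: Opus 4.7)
The plan is to reduce to a traceless observable, bound raw moments of $\bra{v}O\ket{v}$ under $\cD(\rho)$ by transferring to the Haar distribution, and then extract the central moment through a binomial expansion. I would first reduce to a traceless $O$ by replacing $O$ with $\tilde O := O - (\Tr O/\dims)\eye$: this shifts $\bra{v}O\ket{v}$ by a constant on the unit sphere, leaves central moments unchanged, and satisfies $\hsnorm{\tilde O}\le \hsnorm{O}$. Under this normalization, the $\cD(\rho)$-mean becomes $\mu = \Tr(O\rho)/(\dims+1)$ with $|\mu|\le \hsnorm{O}/(\dims+1)$.

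Next, to bound the $k$-th raw moment for any $k\ge 1$, I would use the identity $\expectDistrOf{v\sim\cD(\rho)}{f(v)} = \dims\cdot\expectDistrOf{v\sim\mathrm{Haar}}{f(v)\bra{v}\rho\ket{v}}$ together with Cauchy--Schwarz, giving
\[
\bigl|\expectDistrOf{v\sim\cD(\rho)}{\bra{v}O\ket{v}^k}\bigr| \le \dims\sqrt{\expectDistrOf{v\sim\mathrm{Haar}}{\bra{v}O\ket{v}^{2k}}\cdot \expectDistrOf{v\sim\mathrm{Haar}}{\bra{v}\rho\ket{v}^2}}.
\]
The second factor equals $(1+\Tr(\rho^2))/(\dims(\dims+1))\le 2/(\dims(\dims+1))$ by the standard $2$-design identity, so the task reduces to bounding the $2k$-th Haar moment. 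The formula $\expectDistrOf{v\sim\mathrm{Haar}}{\proj{v}^{\otimes 2k}} = \Pi_{\mathrm{sym}}^{(2k)}/\binom{\dims+2k-1}{2k}$ plus $\Pi_{\mathrm{sym}}^{(2k)}=\frac{1}{(2k)!}\sum_{\pi\in S_{2k}}P_\pi$ turns this into a sum over permutations, where each $\pi$ contributes $\prod_c\Tr(O^{|c|})$ over its cycles. Tracelessness kills every term with a fixed point, leaving only derangements; Schatten-norm monotonicity $|\Tr(O^\ell)|\le \hsnorm{O}^\ell$ for $\ell\ge 2$ and the generating-function identity $\sum_{a:\, a_1=0,\, \sum_k k a_k = 2k}\prod_k 1/(k^{a_k}a_k!) = [z^{2k}]\,e^{-z}/(1-z)\le 1$ together bound $|\Tr[O^{\otimes 2k}\Pi_{\mathrm{sym}}^{(2k)}]|\le \hsnorm{O}^{2k}$. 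Combining with $\binom{\dims+2k-1}{2k}\ge \dims^{2k}/(2k)!$ and $(2k)!\le (2k)^{2k}$ yields $\expectDistrOf{v\sim\mathrm{Haar}}{\bra{v}O\ket{v}^{2k}}\le (2k\hsnorm{O}/\dims)^{2k}$, and plugging back into Cauchy--Schwarz produces the raw-moment bound $(C'k\hsnorm{O}/(\dims+1))^k$ for some universal constant $C'$.

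Finally, expanding $(\bra{v}O\ket{v}-\mu)^h$ via the binomial theorem, applying the raw-moment bound term by term together with $|\mu|\le \hsnorm{O}/(\dims+1)$, and absorbing constants via $(C'j)^j\le (C'h)^h$ for $j\le h$ and $\sum_j\binom{h}{j}=2^h$ collapses the sum into the desired $(Ch\hsnorm{O}/(\dims+1))^h$ bound.

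The main obstacle is the Haar-moment estimate: the cycle-type sum must be shown bounded, which relies crucially on the traceless reduction (which removes the fixed-point contributions that would otherwise dominate the sum) together with the derangement generating-function identity $e^{-z}/(1-z)$. Everything else is routine manipulation around Cauchy--Schwarz and the binomial theorem.
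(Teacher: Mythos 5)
Your proposal is correct but takes a genuinely different and more self-contained route than the paper. Where you explicitly derive the raw-moment bound from first principles (via Cauchy--Schwarz to pass from $\cD(\rho)$ to the Haar measure, the Schur--Weyl cycle-type expansion, and the derangement generating function $e^{-z}/(1-z)$), the paper instead cites the hypercontractivity theorem from \cite{AliakbarpourBCL2025robustquantum} (their Theorem~5.7) as a black box: that result directly bounds $\mathbb{E}_{v\sim\cD(\rho)}[\langle v|O|v\rangle^h]^2 \le (Ch)^{2h}d^{-2h}(\Tr[O^2]+\Tr[O]^2)^h$. Your derivation is in effect a self-contained reproof of that black-box fact. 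For the raw-to-central conversion you use the binomial theorem, while the paper uses the elementary convexity bound $(a-b)^h\le 2^{h-1}(a^h+b^h)$ for even $h$ followed by a Jensen step for odd $h$; both are routine. The reduction to a traceless observable and the bound $\hsnorm{\overline O}\le\hsnorm O$ (the paper uses the looser $\le 2\hsnorm O$) appear in both. Your approach buys self-containedness and a slightly tighter constant; the paper's buys brevity by reusing the prior hypercontractivity lemma.

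One small point to tighten: the binomial expansion bounds $\bigl|\mathbb{E}[(X-\mu)^h]\bigr|$, which for odd $h$ is weaker than $\mathbb{E}[|X-\mu|^h]$, and it is the latter (absolute) central moment that is actually invoked when the theorem is applied inside \cref{thm:result-algorithm}. The paper handles this by the Jensen/Cauchy--Schwarz step $\mathbb{E}[|Y|^h]\le\sqrt{\mathbb{E}[Y^{2h}]}$ to reduce odd $h$ to even $2h$; you should add the same step (or note that it suffices to take $h$ even throughout, since any even $h\ge 2$ serves the downstream application). As literally stated, the theorem with the signed $h$-th moment is what you prove, so this is a mismatch with the use of the lemma rather than an outright error.
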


This theorem essentially says that $\matdotprod{v}{O}{v}$ is sub-exponential with paramenter $C\hsnorm{O}/(\dims+1)$. However, the uniform POVM can only be approximated up to finite moments using unitary designs. Therefore, the finite moment bound is extremely useful for proving guarantees for algorithms using $t$-designs, where the $t$-th order moment matches exactly with the Haar measure.
\begin{theorem}
\label{thm:alg-t-design}
    Suppose $t$-design is used in the classical shadows algorithm and we apply $2\gamma$-truncated mean estimator for each observable. Under $\gamma$-corruption, the algorithm achieves an estimation error of $\bigO{t\gamma^{1-1/t}\max_{i\in[\nobs]}\hsnorm{O_i}}$ with probability at least $1-\delta$ using $\ns = \bigO{\log(\nobs/\delta)/\gamma^2}$ copies.
\end{theorem}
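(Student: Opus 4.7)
The plan is to feed the per-sample classical-shadows estimator into the truncated-mean guarantee, exploiting the fact that a $t$-design reproduces exactly the moment the estimator requires. For each observable $O_i$, \eqref{equ:classical-shadow-mean} expresses $\tr[O_i\rho]=(\dims+1)\mu_i-\tr[O_i]$ with $\mu_i=\expectDistrOf{v\sim\cD(\rho)}{\matdotprod{v}{O_i}{v}}$, so the algorithm forms the scalars $X_j^{(i)}=\matdotprod{v_j}{O_i}{v_j}$ for each measurement outcome $\qbit{v_j}$, passes the (possibly corrupted) sequence $(X_j^{(i)})_{j\le \ns}$ into $\texttt{TruncatedMean}$ with parameter $2\gamma$ to obtain $\hat{\mu}_i$, and returns $\hat{E}_i=(\dims+1)\hat{\mu}_i-\tr[O_i]$.

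Since $\matdotprod{v}{O_i}{v}$ is a polynomial of degree $1$ in both $\qbit{v}$ and $\qadjoint{v}$, the quantity $(\matdotprod{v}{O_i}{v}-\mu_i)^t$ has total degree $t$ in each, and a $t$-design reproduces its Haar expectation exactly (which in particular forces the design's mean to coincide with the Haar mean $\mu_i$). Combining this moment matching with \cref{thm:uniform-povm-moment-bound} taken at $h=t$ yields
\[
\expectDistrOf{v}{(\matdotprod{v}{O_i}{v}-\mu_i)^t}\le \sigma_i^t,\qquad \sigma_i\eqdef \frac{Ct\,\hsnorm{O_i}}{\dims+1},
\]
where the expectation is now understood over the $t$-design ensemble. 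This is exactly the hypothesis of \cref{thm:truncated-mean-error} at $h=t$: applying it with failure probability $\delta/\nobs$ produces $|\hat{\mu}_i-\mu_i|=\bigO{\sigma_i\gamma^{1-1/t}}$ using $\ns=\bigO{\log(\nobs/\delta)/\gamma^2}$ copies, and the subsequent $(\dims+1)$ rescaling gives $|\hat{E}_i-\tr[O_i\rho]|=\bigO{t\,\hsnorm{O_i}\,\gamma^{1-1/t}}$. A union bound over $i\in[\nobs]$ upgrades this to the simultaneous guarantee claimed in the theorem.

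The main subtlety — and what makes the argument clean rather than delicate — is aligning the moment order of the estimator with the order of the design. \cref{thm:truncated-mean-error} depends on a \emph{single} moment $\expect{|X-\mu|^h}$, so choosing $h=t$ ensures that the one moment we need is precisely the one a $t$-design preserves; if $h$ exceeded $t$, higher-order central moments under the design could deviate substantially from their Haar values and \cref{thm:uniform-povm-moment-bound} would no longer transfer. This same coupling is what makes the error exponent $\gamma^{1-1/t}$ interpolate between the design order and the Haar-limit rate. Finally, the adversary's knowledge of the random seed — and hence of the particular design element sampled — is already absorbed by the worst-case corruption model underlying \cref{thm:truncated-mean-error}, so no separate argument is needed to handle it.
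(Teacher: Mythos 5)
Your proposal takes essentially the same route as the paper's own proof: transfer the Haar central-moment bound of \cref{thm:uniform-povm-moment-bound} to the $t$-design via exact moment matching at $h=t$, invoke \cref{thm:truncated-mean-error} with confidence $\delta/\nobs$, rescale by $(\dims+1)$, and union-bound over the $\nobs$ observables. One small caveat you share with the paper: \cref{thm:truncated-mean-error} needs the \emph{absolute} $t$-th central moment $\expect{|X-\mu|^t}$, which for odd $t$ is not a degree-$t$ polynomial and hence is not exactly reproduced by a $t$-design; taking $t$ even, or using that a $t$-design is also a $(t-1)$-design, fixes this at the cost of constants already absorbed by the big-$O$.
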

\begin{proof}
    Since $t$-design matches the moments of the Haar measure up to order $t$, by~\cref{thm:uniform-povm-moment-bound}, for each observable $\Obs_i$, the random variable $(\dims+1)\matdotprod{v}{O_i}{v}-\Tr[O_i]$ has a central $t$-th moment bounded by $\sigma^t$ with $\sigma = Ct\hsnorm{\Obs_i}$. By \cref{thm:truncated-mean-error}, the estimate for $\Obs_i$ has error at most $\eps = O(t\hsnorm{\Obs_i}\gamma^{1-1/t})$, and achieving a success probability of $1-\delta'$ requires $\ns =O( \log(1/\delta')/\gamma^2)$ samples. Choosing $\delta'=\delta/\nobs$ and applying union bound completes the proof.
\end{proof}
Setting $t=\log(1/\gamma)$, the error would be $\eps = \bigO{\gamma\log\frac{1}{\gamma}\max_{i\in[\nobs]}\{\hsnorm{\Obs_i}\}}$ and the sample complexity remains unchanged in terms of $\gamma,\delta$, and $\nobs$, which immediately proves \cref{thm:result-algorithm} in the results section.

For completeness, we provide a comprehensive proof of \cref{thm:truncated-mean-error,thm:uniform-povm-moment-bound,thm:alg-t-design} in~\cref{sec:shadow-tomograghy}. 

\paragraph{Implication for robust full state tomography} Our shadow tomography algorithm  immediately implies a sample efficient algorithm for full state tomography of rank-$\rk$ states using a reduction through a covering argument. 
\begin{lemma}[Full-state tomography reduction]
    Given a shadow tomography algorithm that achieves an accuracy of $\eps$ with probability $1-\delta$ using $\ns=\ns(\eps,\delta, \nobs)$ copies for $\nobs$ observables, we can output an estimate of the unknown state with error $O(\eps)$ in trace distance using $\ns(\eps,\delta, \nobs=\exp(\bigO{dr\log(r/\eps)}))$ copies.
\end{lemma}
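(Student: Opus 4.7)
The plan is to reduce full-state tomography to shadow tomography via a covering argument combined with a Scheff\'e-type tournament. The key observation is that for any two density matrices $\sigma_1,\sigma_2$, the trace distance $\tracenorm{\sigma_1-\sigma_2}$ is attained by the Helstrom observable $O_{\sigma_1,\sigma_2}:=\operatorname{sign}(\sigma_1-\sigma_2)$, which is Hermitian with $\opnorm{O_{\sigma_1,\sigma_2}}\le 1$ and satisfies $\Tr(O_{\sigma_1,\sigma_2}(\sigma_1-\sigma_2))=\tracenorm{\sigma_1-\sigma_2}$. Consequently, accurate shadow-tomographic estimates of $\Tr(O\rho)$ for every such $O$ arising from a sufficiently fine net of candidate states should contain enough information to recover $\rho$ up to $O(\eps)$ in trace distance.

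Concretely, I would first construct an $\eps$-net $\mathcal{N}_\eps$ of the rank-$\rk$ density matrices on $\C^{\dims}$ in trace distance. Parameterizing such states as $UDU^\dagger$ for a $\dims\times\rk$ partial isometry $U$ and a diagonal $D$ ranging over the $\rk$-simplex, a standard volume/packing argument gives $\log|\mathcal{N}_\eps|=O(\dims\rk\log(\rk/\eps))$. Next, for every ordered pair $(\sigma_1,\sigma_2)\in\mathcal{N}_\eps^2$ I define the observable $O_{\sigma_1,\sigma_2}$ as above; this yields a list of $\nobs=|\mathcal{N}_\eps|^2=\exp(O(\dims\rk\log(\rk/\eps)))$ observables with $\opnorm{\cdot}\le 1$. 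Feeding them into the shadow tomography routine, which uses $\ns(\eps,\delta,\nobs)$ copies, returns estimates $\hat e_{\sigma_1,\sigma_2}$ satisfying $|\hat e_{\sigma_1,\sigma_2}-\Tr(O_{\sigma_1,\sigma_2}\rho)|\le\eps$ simultaneously for all pairs with probability $1-\delta$. The algorithm then outputs
\[
\sigma^\star\in\argmin_{\sigma\in\mathcal{N}_\eps}\,\max_{\sigma'\in\mathcal{N}_\eps}\Bigl|\hat e_{\sigma,\sigma'}-\Tr(O_{\sigma,\sigma'}\sigma)\Bigr|.
\]

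For the correctness analysis, let $\sigma^\dagger\in\mathcal{N}_\eps$ be a net element with $\tracenorm{\sigma^\dagger-\rho}\le\eps$. For every $\sigma'$, H\"older's inequality yields $|\Tr(O_{\sigma^\dagger,\sigma'}\sigma^\dagger)-\Tr(O_{\sigma^\dagger,\sigma'}\rho)|\le\eps$, and combining this with the shadow tomography guarantee shows that $\sigma^\dagger$ already achieves objective value at most $2\eps$; a fortiori, so does $\sigma^\star$. Plugging $\sigma'=\sigma^\dagger$ into the optimality bound for $\sigma^\star$, together with the Helstrom identity $\tracenorm{\sigma^\star-\sigma^\dagger}=\Tr(O_{\sigma^\star,\sigma^\dagger}(\sigma^\star-\sigma^\dagger))$, and adding and subtracting $\Tr(O_{\sigma^\star,\sigma^\dagger}\rho)$, one obtains $\tracenorm{\sigma^\star-\sigma^\dagger}\le 4\eps$; the triangle inequality then delivers $\tracenorm{\sigma^\star-\rho}=O(\eps)$. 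The main obstacle I anticipate is this final tournament step: $\sigma^\star$ is selected purely on the basis of its internal consistency with the estimates, and one must argue that the single comparison through $O_{\sigma^\star,\sigma^\dagger}$ alone is enough to certify closeness to $\sigma^\dagger$. Secondary concerns---the sharp $\exp(O(\dims\rk\log(\rk/\eps)))$ covering number and the $\opnorm{\cdot}\le 1$ normalization of the Helstrom sign operator---are standard and reduce to a volume computation and a spectral check.
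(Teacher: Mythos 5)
Your proposal is correct and follows essentially the same route as the paper: construct an $\exp(O(\dims\rk\log(\rk/\eps)))$-size trace-distance net of rank-$\rk$ states, form a Helstrom observable for each pair, run shadow tomography on these, and select the net element via the Scheff\'e/Yatracos minimax criterion $\min_{\sigma}\max_{\sigma'}|\hat e_{\sigma,\sigma'}-\Tr(O_{\sigma,\sigma'}\sigma)|$, finishing with the triangle inequality. The obstacle you flag is not actually a gap: since $\sigma^\star$ attains objective value $\le 2\eps$, the single comparison through $O_{\sigma^\star,\sigma^\dagger}$ together with the shadow-tomography guarantee $|\hat e_{\sigma^\star,\sigma^\dagger}-\Tr(O_{\sigma^\star,\sigma^\dagger}\rho)|\le\eps$ and the H\"older bound $|\Tr(O_{\sigma^\star,\sigma^\dagger}(\rho-\sigma^\dagger))|\le\eps$ directly gives $\tracenorm{\sigma^\star-\sigma^\dagger}\le 4\eps$, exactly as the paper does (with the cosmetic difference that you use the sign operator with eigenvalues $\pm 1$ while the paper uses the projector onto the positive part with eigenvalues $0,1$; both have operator norm at most $1$, rank at most $2\rk$, and yield an equivalent Helstrom identity up to a factor of $2$). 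One point worth making explicit when this lemma is later instantiated with the paper's robust shadow-tomography algorithm: the low rank of the Helstrom observable is what gives $\hsnorm{O_{\sigma,\sigma'}}\le\sqrt{2\rk}$, which is needed to control the final error; your sign-operator choice preserves this just as well as the paper's projector.
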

\begin{proof}[Proof sketch]
    We sketch the idea of the reduction. First, we argue that there exists an $\eps$-cover $\mathcal{N}$ of rank-$\rk$ states of size $|\mathcal{N}|=\exp(\bigO{\dims\rk\log(r/\eps)})$. For every pair of states $\sigma_i, \sigma_j\in \mathcal{N}$, we can define the Holevo-Holstrom observable $0\preceq\Obs_{ij}\preceq \eye_\dims$ with rank at most $2\rk$ where difference in the expectation values of the observable corresponds to the trace distance,
\[
|\Tr[\sigma_{i}\Obs_{ij}]-\Tr[\sigma_{j}\Obs_{ij}]|=\tracenorm{\sigma_i-\sigma_j}/2.
\]
The number of observables is $\nobs \le |\mathcal{N}|^2=\exp(O(\dims r\log(r/\eps)))$.

Given an unknown rank-$\rk$ state $\rho$, by estimating all $\Tr[\rho O_{ij}]$ accurately, the Holevo-Holstrom observables enable us to compute some kind of ``surrogate trace distance'' with respect to all states in the cover $\mathcal{N}$. In particular, let $E_{ij}$ be the estimated expectation value of each observable $\Obs_{ij}$, the surrogate distance to some $\sigma_i\in \mathcal{N}$ is defined as
\[
d_i \eqdef\max_{j\in[|\mathcal{N}|]}|E_{ij}-\Tr[\sigma_jO_{ij}]|.
\]
Our estimate of $\rho$ is simply the state in the cover $\mathcal{N}$ that minimizes the surrogate distance. The $O(\eps)$ error guarantee can be obtained by repeated use of triangle inequality and that $O_{ij}$ maximizes $|\Tr[(\sigma_i-\sigma_{j})O]|$ for all $0\preceq O\preceq \eye_\dims$.
\end{proof}
We can apply our algorithm in~\cref{thm:result-algorithm}. Since $\hsnorm{O_{ij}}\le \sqrt{2r}$, we set $\eps = \tildeO{\gamma\sqrt{r}}$, which is the error guarantee of our algorithm.  Note that  to achieve a success probability of $1-\delta$, the number of copies used by our robust shadow tomography algorithm is
\[
\ns = \bigO{\frac{1}{\gamma^2}\log\frac{M}{\delta}}=\tildeO{\frac{\dims r}{\gamma^2}}.
\]
Up to log factors, this matches the lower bound for single copy tomography that achieves the same error of $\eps =\gamma\sqrt{r}$ without corruption.  
Despite using $\exp(\dims \rk)$ post-processing time, we nearly resolve the sample complexity of robust state tomography. We leave the problem of further improving computational efficiency as future work.

\subsection{Lower bound}
\label{sec:overview-lower}
To prove the error lower bound, we construct a one-versus-many testing problem which an accurate shadow tomography algorithm must be able to distinguish.
Specifically, we construct a set of traceless observables $\Obs_1, \ldots, \Obs_{\nobs}$ with eigenvalues either $+1$ or $-1$. If an algorithm can learn all expectation values with accuracy $\eps$, it should distinguish between the following two cases,
\[
H_0: \rho = \qmm\eqdef\frac{\eye_\dims}{\dims},\quad H_1:\sigma_i\eqdef \frac{\eye_\dims+3\eps\Obs_i}{\dims},i\sim[\nobs].
\]

We then construct an adversary that, in expectation, changes at most $\gamma$-fraction of the outcomes to make outcomes from $H_1$ look exactly the same as the second case $H_2$ where the unknown state is uniformly sampled from one of $\frac{\eye_\dims+3\eps\Obs_i}{\dims}$.
Let $\p_{\rho,\POVM}$ be the outcome distribution when applying a measurement $\POVM$ to $\rho$.
Then the adversary could use a coupling strategy to corrupt the outcomes. A coupling between two distributions $\p_1, \p_2$ can be viewed as a randomized function $F$ such that if $X\sim \p_1$, $F(X)$ is exactly distributed as $\p_2$. 
Standard results in statistics~\cite[Theorem 2.12]{Hollander12} say that there exists a coupling for all pairs of distributions such that $\probaOf{F(X)\ne X}=\totalvardist{\p_1}{\p_2}$. 

When receiving outcomes $x_1, \ldots, x_\ns$, which are obtained by measuring $\qmm$ with $\POVM_{1}, \ldots, \POVM_{\ns}$, the adversary does the following,
\begin{enumerate}
    \item Sample $i\in[\nobs]$ uniformly.
    \item For each outcome $x_j$, apply a coupling between $\p_{\qmm, \POVM_j}$ and $\p_{\sigma_i, \POVM_j}$.
\end{enumerate}
By the definition of coupling, \emph{in expectation}, the fraction of outcomes changed by the adversary is at most
\begin{equation}
\label{equ:expect-total-var-dist}
    \gamma = \frac{1}{\ns\nobs}\sum_{j=1}^{\ns}\sum_{i=1}^{\nobs}\totalvardist{\p_{\qmm,\POVM}}{\p_{\sigma_i,\POVM}}.
\end{equation}

Moreover, the outcomes would be distributed exactly as if $\rho$ is randomly picked from $\sigma_i$'s, so the algorithm cannot distinghuish the two cases, and thus must incur an error of at least $\eps$. 
Therefore it suffices to upper bound the expected total variation distance~\eqref{equ:expect-total-var-dist} in terms of $\eps, \dims$ and $\nobs$. 

We show that there exists a set of $\nobs$ observables such that $\gamma=\tildeO{\eps/\sqrt{\min\{\nobs, \dims\}}}$ for all measurements. Rearranging the terms gives a lower bound for $\eps$ that matches our lower bound in~\cref{thm:result-lower}.

For direct measurements on the observables, (which corresponds to $\{(\eye_\dims+\Obs_i)/2, (\eye_\dims-\Obs_i)/2\}$ for our choice of $O_i$'s), we can find a different set of observables and \eqref{equ:expect-total-var-dist} is bounded as $\gamma=\bigO{\eps/\nobs}$, which proves the $\eps=\Omega(\gamma\nobs)$ lower bound in~\cref{thm:naive}.

A small technical issue is that the adversary only satisfies the $\gamma$-fraction constraint in expectation. 
However, by Markov's inequality, with probability at least $1/2$, the adversary should not change more than $2\gamma\ns$ outcomes. 
If we force the adversary to stop when $2\gamma\ns$ outcomes have been changed, the resulting outcome distribution would still be close to that in the case of $H_1$ in total variation distance, which is still hard to distinguish for all algorithms. 
We provide the full proof in~\cref{sec:lower} and also refer the readers to \cite[Section 4.1]{AliakbarpourBCL2025robustquantum}.
\label{sec:overview-lb}

\section{Experiments and practical implications}
We conduct a simulation experiment on classical computers and compare our approach and that of \cite{huang2020predicting}. Using the original uniform POVM, we find that a truncation-based approach is indeed more robust to adversarial noise than the standard classical shadows using MoM. 
This provides strong support for our theoretical results and demonstrates that our proposed approach could have a significant practical impact.

\paragraph{Experiment setup}
We test our algorithm for fidelity estimation. We have an unknown state $\nqubits=5$ qubit pure state $\qbit{\phi}$, and we design $\nobs$ random pure states $\mathcal{S}=\{\qbit{\psi_i}\}_{i\in [\nobs]}$ where the ground truth fidelity  $|\qdotprod{\psi_i}{\phi}|^2=0.9$. 
The goal is to estimate the fidelity of $\qbit{\phi}$ with respect to all the $\qbit{\psi_i}$'s. The number of copies is $\ns=10^{4}$, which is chosen to target an accuracy level of the order $0.01$. 

The noise simulates the worst-case adversary for the classical shadow: we first pick a state from the set $\mathcal{S}$, say $\qbit{\psi_0}$, and then independently with probability $\gamma\in[0, 0.02]$ changes the outcome vector $\qbit{v_i}$ to $\qbit{\psi_0}$. For the state $\qbit{\psi_0}$, the MoM estimator would have an error at least $\Omega(\gamma\dims)$ using the same argument as~\cref{thm:lb-classical-shadow}. We also argue that it is practically reasonable to only experiment with $\gamma$ with roughly the same order as the targeted accuracy. Indeed, if an experimentalist believes that the probability of corruption is much larger than the desired accuracy, then it makes no sense to continue the experiment and it would be wise to try to reduce the level of noise in the system. 

\paragraph{Result} 
We present the result for $\nobs=1$ and $\nobs=62$ in~\cref{fig:experiment-main}. The latter value is chosen so that $\nobs\ge \dims$ which falls into the regime where classical shadows beats direct measurement approach. Experiments are repeated 5 times with standard deviation shown in the figure as error bars.
\begin{figure}
    \centering
    \includegraphics[width=0.46\linewidth]{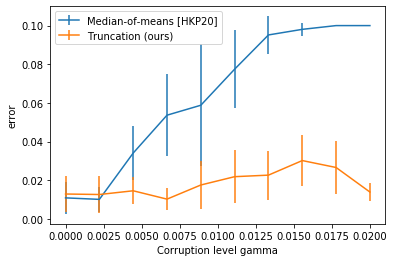}
    \includegraphics[width=0.46\linewidth]{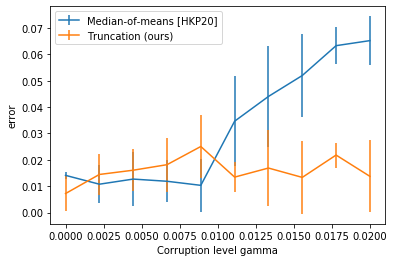}
    \caption{Comparison of MoM and truncation-based classical shadows for different number of observables. \textbf{Left:} $\nobs=1$. \textbf{Right:} $\nobs=62$.}
    \label{fig:experiment-main}
\end{figure}
We can observe a general trend that for small $\gamma$ our algorithm performs on-par with \cite{huang2020predicting} which we showed does not have worst-case robustness guarantees. 
Our truncation algorithm exhibits much stronger robustness when $\gamma$ is large, where we significantly outperform the MoM estimator. 
Our advantage is more evident for $\nobs=1$, which might be more relevant for fidelity estimation as we often create a quantum system to generate a specific state of interest. These results provide a strong support for our theoretical claims.


\paragraph{Practical implications} Our simulation experiment together with our theoretical bounds convey a clear and powerful message: a simple truncation step protects the widely used classical shadows algorithm from small-probability worst-case noise, with provable theoretical guarantees, and without using any prior knowledge about the noise structure. 
Practitioners may already be adopting this approach when processing their data, but now they should have more peace of mind given our theoretical guarantees and simulation results.
Although only tested on classical computers using the ideal uniform POVM, we have confidence that our approach is also effective for practical implementations of the uniform POVM on real quantum computers.

We also emphasize that our algorithm is not a replacement, but rather an addition to existing approaches, as its simplicity could be combined with many types of algorithms, including the MoM used by~\cite{huang2020predicting}. 
Furthermore, our idea could be combined with existing model-specific robust shadow tomography algorithms~\cite{ChenYuZengFlammia2020robustshadow,hu2025demonstration,Brieger2025stability}, which could simultaneously provide better guarantees for specific noise structures and be robust if the assumptions are not satisfied for some small probability.

\section{Preliminaries}

\subsection{Quantum state and POVM}
\paragraph{Complex matrices}
Let $A,B\in\C^{\dims\times\dims}$. Inner product between $A$ and $B$ is defined as $\hdotprod{A}{B}\eqdef\Tr[A^\dagger B]$. 
We denote the set of all $\dims\times\dims$ Hermitian matrices by $\Herm{\dims}$.
For Hermitian matrices $A,B$, $\hdotprod{A}{B}=\hdotprod{B}{A}\in \R$. Thus the subspace of Hermitian matrices $\Herm{\dims}$ is a \textit{real} Hilbert space (i.e. the associated field is $\R$) with the same matrix inner product.

\paragraph{Schatten norms} Let $\Lambda=(\lambda_1, \ldots, \lambda_\dims)\ge 0$ be the \emph{singular values} of a matrix $A$. {For Hermitian matrices, the singular values are the absolute values of the eigenvalues.} Then for $p\ge 1$, the \emph{Schatten $p$-norm} is defined as 
$
\|A\|_{S_p}\eqdef \|\Lambda\|_p
$. Important examples include trace norm $\tracenorm{A}\eqdef\|A\|_{S_1}$, Hilbert-Schmidt norm $\hsnorm{A}\eqdef\|A\|_{S_2}=\sqrt{\hdotprod{A}{A}}$, and operator norm $\opnorm{A}\eqdef\|A\|_{S_\infty}=\max_{i=1}^\dims\lambda_i$. 

\paragraph{Quantum states} We use $\qbit{\psi}$ to represent a complex vector in $\C^{\dims}$. For $j\in\{0, \ldots, \dims-1\}$, $\qbit{j}$ denotes the vector with 1 at the $j$th coordinate and 0 everywhere else (assuming the coordinates are numbered starting from 0). $\qadjoint{\psi}\eqdef(\qbit{\psi})^\dagger$ is a row vector. $\qdotprod{\psi}{\phi}$ is the inner product of $\qbit{\psi}$ and $\qbit{\phi}$. 

A quantum system with $\nqubits$ qubits has dimensionality $\dims=2^{\nqubits}$. It can be described by \emph{density matrix} $\rho\in\Herm{\dims}$, which is a positive-semidefinite Hermitian matrix with $\Tr[\rho]=1$. A special case is the maximally mixed state $\qmm\eqdef \eye_\dims /\dims$. 

The set of $d$-dimensional quantum states of rank $r$ is denoted by $\mathcal{S}_{d,r}$. In particular, states with $r=1$ is called pure states, and we often use the vector form , e.g. $\qbit{\psi}\in\C^{\dims}$ to denote pure states. States with $r>1$ are called mixed states.

The trace distance between quantum states is $\norm{\cdot}_{\Tr}\eqdef\frac{1}{2}\tracenorm{\cdot}$. As they only differ by a constant, we sometimes use them interchangeably. Another commonly used matrix is called \emph{fidelity}. For a mixed state $\rho$ and pure state $\qbit{\psi}$, the fidelity is defined as $F(\rho,\psi)=\matdotprod{\psi}{\rho}{\psi}$.\footnote{Some literature define it as the square root of what we have.}
There is a more general definition for fidelity between mixed states, but it will not be used in our work.

\paragraph{Measurements} The most general formulation of measurements is called the \emph{positive operator-valued measure} (POVM). Let $\outset$ be an outcome set. Then a POVM $\POVM=\{M_x\}_{x\in \outset}$ is a set of matrices indexed by the outcomes where $M_x$ is p.s.d. and $\sum_{x\in \outset}M_x=\eye_\dims$. Let $X$ be the outcome of measuring $\rho$ with $\POVM$, then the probability observing $x\in\outset$ is given by the \emph{Born's rule},
\[
\probaOf{X=x}=\Tr[\rho M_x].
\]
This definition can be extended to an infinite outcome set.

\subsection{Probability distances}
Let $\p$ and $\q$ be distributions over a finite domain $\mathcal{X}$. We define $\ell_p$ distance as 
\[\norm{\p-\q}_p\eqdef\Paren{\sum_{x\in\mathcal{X}}{|\p(x)-\q(x)|^p}}^{1/p}.\]
We define the \emph{total variation distance},
\[
\totalvardist{\p}{\q}\eqdef\sup_{S\subseteq\mathcal{X}}(\p(S)-\q(S))=\frac{1}{2}\sum_{x\in\mathcal{X}}|\p(x)-\q(x)|.
\]
Note that $\totalvardist{\p}{\q}=\frac{1}{2}\norm{\p-\q}_1$.
The KL-divergence  is defined as
\[
\kldiv{\p}{\q}\eqdef\sum_{x\in\mathcal{X}}\p(x)\log\frac{\p(x)}{\q(x)}.
\]
The chi-square divergence is defined as follows,
\[
\chisquare{\p}{\q}\eqdef \sum_{x\in\mathcal{X}}\frac{(\p(x)-\q(x))^2}{\q(x)}.
\]
By Pinsker's inequality and concavity of the logarithm function,
\[
2\totalvardist{\p}{\q}^2\le \kldiv{\p}{\q}\le \chisquare{\p}{\q}.
\]

All the definitions can be extended to general probability measures.

\subsection{Haar measure}
We use $\Haar{d}$ to denote the Haar measure over $\dims\times\dims$ unitary matrices, which can be viewed as a uniform distribution over unitary matrices.
Let $\Sp^{\dims-1}$ denote the complex unit sphere in $\C^{\dims}$. 
The Haar measure induces a unique unitarily invariant measure on $\Sp^{\dims-1}$, which we denote as $\dm v$. It is essentially the uniform distribution on the complex unit sphere.
Understanding the Haar measure requires sophisticated representation theoretic techniques, which is not within the scope of this paper. We only display the necessary definitions and results.

\paragraph{Permutation and cycles}
A permutation $\pi:[n]\mapsto[n]$ is a bijection over $[n]$. Let $\Sim_n$ be the set of all permutations over $[n]$. Every permutation can be decomposed into cycles. For example, the permutation $\pi=(2, 3, 1, 5, 4)$ has just one cycle
\[
(4,5), (1,2,3).
\]
We use $\cycle(\pi)$ to denote the set of cycles in $\pi$. For a cycle $c\in \cycle(\pi)$, $|c|$  denotes the length of the cycle. In the example above $|(1,2,3)|=3$ and $|(4,5)|$=2.

\paragraph{Haar measure moments}
Let $\qbit{u}\in \C^{\dims}$ be a unit vector drawn uniformly from the unit sphere. The $\ab$-th order moment $\qproj{u}^{\otimes\ab}$ can be computed using Schur-Weyl duality, \cite[Eq.(14)]{guctua2020fast}.
\begin{equation}
    \expectDistrOf{\qbit{u}\sim\Haar{\dims}}{\qproj{u}^{\otimes\ab}}=\binom{\dims+k-1}{k}^{-1}P_{\text{Sym}^{(k)}}.
    \label{equ:haar-k-moment}
\end{equation}
Here $P_{\text{Sym}^{(k)}}=\frac{1}{k!}\sum_{\pi\in \Sim_{k}}P_{\pi}$ is the projection matrix onto the symmetric subspace of $(\C^{\dims})^{\otimes k}$, where $P_\pi$ is a permutation operator defined as
\[
P_\pi\qbit{\psi_1}\otimes\qbit{\psi_{\ab}}=\qbit{\psi_{\pi^{-1}(1)}}\otimes\cdots\otimes\qbit{\psi_{\pi^{-1}(\ab)}}.
\]
Permutation operators are unitary with $P_{\pi}^\dagger = P_{\pi}^{-1}=P_{\pi^{-1}}$.
The following equation is handy when analyzing our algorithms.
\begin{equation}
    \expectDistrOf{\qbit{u}\sim\Haar{\dims}}{\matdotprod{u}{M}{u}^k}=\binom{\dims+k-1}{k}^{-1}\frac{1}{k!}\sum_{\pi\in\Sim_k}\prod_{c\in \cycle(\pi)}\Tr[M^{|c|}].
    \label{equ:haar-trace-moment}
\end{equation}

\paragraph{Unitary/spherical designs} It is physically impossible to uniformly sample from the Haar measure on quantum systems. Thus, people often use $t$-designs as a surrogate, which is a finite set of unitaries/unit vectors such that uniformly sampling from the set has matching $t$-th order moment as the Haar measure.

\subsection{Covering}

An $\eps$-covering of a set $K$ for the norm $\|\cdot \|$ is a subset $\mathcal{N} \subset K$ such that,
\begin{itemize}
    \item $\mathcal{N}$ is finite, 
    \item for every $x\in K$, there exists $y\in \mathcal{N}$ such that $\|x - y\| \le \eps.$
\end{itemize}
 We denote by $N(K, \|\cdot \|, \eps)$ the minimal cardinality of an $\eps$-covering of $K$ for the norm $\|\cdot \|$ and the closeness parameter $\eps$. 

\section{Lower bound}
 \label{sec:lower}
\subsection{Lower bound for general measurements and observables}
This section proves our main lower bound result~\cref{thm:result-lower}. \\
Let $\dims$ be a power of 2. We construct a set of observables $\Obs_1, \ldots, \Obs_{\nobs}$ with $\Tr[\Obs_i]=0$ and eigenvalues of $\pm1$.
Let $\eps\le 1/3$. A shadow tomography algorithm that achieves an error of $\eps$ must be able to solve the following one-versus-many hypothesis testing problem,
\begin{align}
\label{equ:hard-case-general}
H_0: \qmm\eqdef\frac{\eye_\dims}{\dims}, \quad H_1: \sigma_i =\frac{\eye_\dims+3\eps\Obs_i}{\dims}, i\sim [\nobs].
\end{align}
This is because $\Tr[\Obs_i\qmm]=0$ for all $\Obs_i$, while $\Tr[\Obs_i \sigma_i] = \Tr[\Obs_i(\eye_\dims+3\eps\Obs_i)/\dims]=3\eps$. 
Thus, if the unknown state $\rho=\qmm$, all estimates $\{E_i\}_{i\in[\nobs]}$ of an $\eps$-accurate shadow tomography algorithm should be within $[-\eps,\eps]$ with probability at least $1-\delta$. 
On the other hand, if $\rho=\sigma_i$ for some $i\in [\nobs]$, at least one of the estimates $E_j$ should be at least $2\eps$: indeed, the estimate for $O_i$ should satisfy $|\Tr[\sigma_i O_i]-E_{i}|\le \eps$, which implies $E_i\ge 2\eps$.

When $\gamma$-fraction of the samples are corrupted, outcomes that come from $H_0$ could look much closer to $H_1$ than without corruption. 
Formally, \cite{AliakbarpourBCL2025robustquantum} characterizes the hardness of testing under adversarial corruption using the earth-mover distance.
\begin{definition}
\label{def:emd}
    Let $\bfP_1,\bfP_2$ be two distributions over $\cX^{\ns}$ (note that they may not necessarily be product distributions). Let $\pi(\bfP_1, \bfP_2)$ be the set of couplings between the two distributions. The earth-mover distance is
    \[
    \EMD{\bfP_1}{\bfP_2}\eqdef \min_{\Pi\in \pi(\bfP_1,\bfP_2)}\expectDistrOf{(X^\ns, Y^{\ns})\sim \Pi}{\ham{X^{\ns}}{Y^{\ns}}}.
    \]
\end{definition}

\begin{lemma}[From Lemma 4.6, \cite{AliakbarpourBCL2025robustquantum}]
\label{lem:emd}
    Let $\rho\in \C^{\dims\times\dims}$ be a fixed state and $\mathcal{D}$ be a distribution of states. If for all measurements $\POVM^{\ns}=(\POVM_1, \POVM_2, \ldots, \POVM_\ns)$, 
    \[
    \EMD{\expectDistrOf{\sigma\sim\mathcal{D}}{\p_{\sigma}^{x^n}}}{\p_{\rho}^{x^n}}\le \frac{\gamma \ns}{2},
    \]
    then there does not exist an algorithm that can distinguish between $H_0:\sigma=\rho$ and $H_1:\sigma\sim \mathcal{D}$ with probability at least 0.8.
\end{lemma}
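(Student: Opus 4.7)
The plan is to design a corruption strategy under $H_0$ that transforms outcomes drawn from $\p_\rho^{x^\ns}$ into samples statistically indistinguishable from the mixture $\expectDistrOf{\sigma\sim\mathcal{D}}{\p_\sigma^{x^\ns}}$ arising under $H_1$. Paired with the trivial (do-nothing) adversary under $H_1$, a Le Cam two-point argument then rules out any tester achieving success probability $0.8$, uniformly over the algorithm's choice of measurements.

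Fix any measurement scheme $\POVM^\ns$ chosen by the algorithm. By the EMD hypothesis, there exists an optimal coupling $\Pi^{*}$ between $\p_\rho^{x^\ns}$ and $\expectDistrOf{\sigma\sim\mathcal{D}}{\p_\sigma^{x^\ns}}$ such that $(X^\ns, Y^\ns)\sim\Pi^{*}$ satisfies $\expect{\ham{X^\ns}{Y^\ns}}\le\gamma\ns/2$. Under $H_0$, the adversary observes the honest outcomes $X^\ns$ sampled from $\p_\rho^{x^\ns}$, draws $Y^\ns$ from the conditional law $\Pi^{*}(\cdot\mid X^\ns)$, and then outputs
\[
    Z^\ns = \begin{cases} Y^\ns, & \ham{X^\ns}{Y^\ns} \le \gamma\ns, \\ X^\ns, & \text{otherwise.} \end{cases}
\]
By construction this is a valid $\gamma$-adversary: it alters coordinates only where $X^\ns$ and $Y^\ns$ disagree, and never more than $\gamma\ns$ of them. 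The strategy requires knowledge of $\rho$, $\mathcal{D}$, and $\POVM^\ns$, all of which the corruption model grants.

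Let $A=\{\ham{X^\ns}{Y^\ns}\le\gamma\ns\}$. Markov's inequality gives $\probaOf{A^c}\le (\gamma\ns/2)/(\gamma\ns)=1/2$. Writing $\mathbf{P}_0$ for the law of $Z^\ns$ and $\mathbf{P}_1=\expectDistrOf{\sigma\sim\mathcal{D}}{\p_\sigma^{x^\ns}}$ for the unconditional law of $Y^\ns$ under the coupling, we bound for every measurable set $S$
\[
    |\mathbf{P}_0(S)-\mathbf{P}_1(S)| = \bigl|\probaOf{X^\ns\in S,\,A^c}-\probaOf{Y^\ns\in S,\,A^c}\bigr| \le \probaOf{A^c} \le 1/2,
\]
since $Z^\ns=Y^\ns$ pointwise on $A$. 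Hence $\totalvardist{\mathbf{P}_0}{\mathbf{P}_1}\le 1/2$, and Le Cam's two-point bound gives a worst-case success probability of at most $\tfrac{1}{2}(1+\totalvardist{\mathbf{P}_0}{\mathbf{P}_1}) \le 3/4 < 0.8$ for any tester and any choice of $\POVM^\ns$.

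The main subtlety to handle carefully is that the optimal coupling $\Pi^{*}$ need not be a product distribution, especially when the measurement scheme may be adaptive or when $\mathcal{D}$ induces dependencies across copies. This is not a genuine obstacle: the construction only relies on the expected Hamming distance under $\Pi^{*}$ being controlled, which is exactly what the EMD assumption supplies, and the comparison of $\mathbf{P}_0$ and $\mathbf{P}_1$ must be carried out at the joint-distribution level as above rather than by conditioning the marginal of $Y^\ns$ on $A$ (which would in general perturb it).
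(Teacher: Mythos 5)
Your proposal is correct and follows essentially the same route the paper sketches in its technical overview (coupling induced by the EMD bound, Markov's inequality to show the number of changes is within budget with probability at least $1/2$, truncation of the adversary when the budget would be exceeded, and a total-variation / Le~Cam argument to conclude), which is what the cited Lemma~4.6 of \cite{AliakbarpourBCL2025robustquantum} formalizes. The only cosmetic difference is your truncation rule (revert to $X^\ns$ entirely whenever the Hamming distance exceeds $\gamma\ns$, rather than partially corrupting up to the budget), which is simpler and still gives $\totalvardist{\mathbf{P}_0}{\mathbf{P}_1}\le\probaOf{A^c}\le 1/2$ and hence success probability at most $3/4<0.8$.
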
 

The following helper lemma is implicit in the proof of \cite[Theorem 4.9]{AliakbarpourBCL2025robustquantum}.
\begin{lemma}
\label{lem:emd-tv-distance}
Let $\rho\in \C^{\dims\times\dims}$ be a fixed state and $\mathcal{D}$ be a distribution of states.
    \[
    \EMD{\expectDistrOf{\sigma\sim\mathcal{D}}{\p_{\sigma}^{x^n}}}{\p_{\rho}^{x^n}}\le \sum_{i=1}^\ns\expectDistrOf{\sigma\sim\cD}{\totalvardist{\p_{\sigma}^i}{\p_{\rho}^i}}.
    \]
\end{lemma}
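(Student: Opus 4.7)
The plan is to exhibit an explicit coupling whose expected Hamming distance matches the right-hand side, then invoke the definition of EMD as an infimum over couplings. The coupling mirrors the adversarial strategy in Section~\ref{sec:overview-lower}: first draw $\sigma\sim\cD$, then, independently across coordinates (conditionally on $\sigma$ and on the algorithm's internal random seed $R$), apply a maximal coupling between $\p_\sigma^i$ and $\p_\rho^i$. By the standard maximal coupling construction (e.g.\ Theorem~2.12 of \cite{Hollander12}), the pair $(X_i,Y_i)$ returned by this coupling satisfies $X_i\sim\p_\sigma^i$, $Y_i\sim\p_\rho^i$, and $\probaOf{X_i\neq Y_i\mid\sigma,R}=\totalvardist{\p_\sigma^i}{\p_\rho^i}$.

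Next I would verify the two marginals. The $X^{\ns}$-marginal is $\expectDistrOf{\sigma\sim\cD}{\p_\sigma^{x^\ns}}$ directly from the construction. For the $Y^{\ns}$-marginal, note that conditioned on $\sigma$ and $R$, the per-coordinate couplings use fresh independent randomness, so $Y_1,\ldots,Y_{\ns}$ are mutually independent with $Y_i\sim\p_\rho^i$; their joint conditional law is therefore the product $\p_\rho^{x^\ns}(R)$, which does not depend on $\sigma$. Marginalizing over $\sigma$ and over $R$ (in the same way on both sides of the coupling) yields $Y^{\ns}\sim\p_\rho^{x^\ns}$. With both marginals correct, linearity of expectation gives
\[
\EE\!\left[\ham{X^{\ns}}{Y^{\ns}}\right]=\sum_{i=1}^{\ns}\probaOf{X_i\neq Y_i}=\sum_{i=1}^{\ns}\expectDistrOf{\sigma\sim\cD}{\totalvardist{\p_\sigma^i}{\p_\rho^i}},
\]
which upper bounds $\EMD{\expectDistrOf{\sigma\sim\cD}{\p_\sigma^{x^\ns}}}{\p_\rho^{x^\ns}}$ by the variational definition of EMD in Definition~\ref{def:emd}.

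The only subtle point is that this argument implicitly relies on $\p_\rho^{x^\ns}$ and $\p_\sigma^{x^\ns}$ being product distributions (possibly after conditioning on a shared seed $R$) so that independent per-coordinate maximal couplings compose into a valid joint coupling with the right product marginal on $Y^{\ns}$. This holds automatically in the non-adaptive setting, which is the only setting in which the lemma is used (namely to lower bound the EMD arising in the hypothesis-testing reduction for Theorem~\ref{thm:result-lower}). In a truly adaptive setting the notation $\p_\rho^i,\p_\sigma^i$ becomes less canonical because $\POVM_i$ may depend on previous outcomes and thus differ between the two coupled trajectories once they diverge; I therefore expect the main care in the write-up to go into stating cleanly that the measurement scheme on both sides of the coupling is synchronized, rather than into any nontrivial probabilistic machinery.
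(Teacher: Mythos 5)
Your proof is correct and mirrors the paper's approach. The paper cites the lemma as implicit in the cited reference rather than giving an explicit proof, but the coupling you describe is exactly the construction sketched in the technical overview (Section~\ref{sec:overview-lb}): draw $\sigma\sim\cD$, then per-coordinate apply a maximal coupling between $\p_\sigma^i$ and $\p_\rho^i$, so that the expected Hamming distance equals $\sum_i\EE_\sigma[\totalvardist{\p_\sigma^i}{\p_\rho^i}]$; by Definition~\ref{def:emd} this upper bounds the EMD. Your observation about non-adaptivity (so that the per-coordinate conditional laws factor into a product on the $\p_\rho$ side that is independent of $\sigma$) is the right thing to flag and is indeed the only hypothesis the argument needs.
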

Therefore, it suffices to upper bound the expected total variation distance for each copy. 
Without loss of generality, assume that the measurement applied to a particular copy has the form of $\POVM=\{\dims w_x\qproj{\psi_x}\}_{x\in\cX}$, and let $\p_\rho$ be the outcome distribution when $\POVM$ is applied to $\rho$.
We use the shorthand $\p_i$ to denote $\p_{\sigma_i}$.
We can thus bound the expected total variation distance as follows,
\begin{align}
    \expectDistrOf{i\sim [\nobs]}{\totalvardist{\p_{i}}{\p_{\qmm}}} &\le \expectDistrOf{i\sim [\nobs]}{\sqrt{\frac{1}{2}\chisquare{\p_{i}}{\p_{\qmm} }}}\label{equ:lb-general-pinsker}\\
    &\le\sqrt{\frac{1}{2}\expectDistrOf{i\sim [\nobs]}{\chisquare{\p_{i}}{\p_{\qmm}} }}\label{equ:lb-general-concave}.
\end{align}
\eqref{equ:lb-general-pinsker} is by Pinsker's inequality and the relation between KL and $\chi^2$ divergence. \eqref{equ:lb-general-concave} is due to concavity of square root.

By Born's rule, $\p_{\qmm}(x)=w_x$ and $\p_i(x)=w_x(1+3\eps\matdotprod{\psi_x}{O_i}{\psi_x})$. Therefore,
\[
\chisquare{\p_{i}}{\p_{\qmm}}=\sum_{x}\frac{(\p_i(x)-\p_{\qmm}(x))^2}{\p_{\qmm}(x)}=9\eps^2\sum_{x}w_x \matdotprod{\psi_x}{O_i}{\psi_x}^2.
\]
Proceeding from \eqref{equ:lb-general-concave},
\begin{align}
\label{equ:lb-almost-final}
    \expectDistrOf{i\sim [\nobs]}{\totalvardist{\p_{i}}{\p_{\qmm}}}\le \frac{3\eps}{\sqrt{2}}\sqrt{\expectDistrOf{i\sim[\nobs]}{\sum_{x}w_x \matdotprod{\psi_x}{O_i}{\psi_x}^2}}=\frac{3\eps}{\sqrt{2}}\sqrt{\sum_{x}w_x\expectDistrOf{i\sim[\nobs]}{ \matdotprod{\psi_x}{O_i}{\psi_x}^2}}.
\end{align}
Our next step is to show that there exists a set of observables such that the above quantity is small for all measurements. 
We can directly apply a result on randomly chosen observables from \cite{ChenCH021},
\begin{lemma}[Lemma 5.6, \cite{ChenCH021}]
\label{lem:max-obs-square}
    There exists a set of observables $O_1, \ldots, O_{\nobs}$ such that for all unit vector $\qbit{\psi}\in \C^\dims$,
    \[
    \frac{1}{\nobs}\sum_{i=1}^{\nobs}\matdotprod{\psi}{O_i}{\psi}^2\le \frac{2304\pi^3\log(1+2\min\{\nobs, \dims\})}{\min\{\nobs, \dims\}}+\frac{4}{\min\{\nobs, \dims\}}+\frac{1}{\dims+1} =\bigO{\frac{\log\nobs}{\nobs}+\frac{\log\dims}{\dims}}.
    \]
\end{lemma}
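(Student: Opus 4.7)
The plan is to prove existence via the probabilistic method, sampling random observables of the right spectral type and showing the bound holds with positive probability. I would set $O_i = U_i D U_i^\dagger$ with $U_1,\ldots,U_{\nobs}$ i.i.d.\ Haar-distributed and $D$ the diagonal matrix with $\lceil\dims/2\rceil$ entries equal to $+1$ and $\lfloor\dims/2\rfloor$ entries equal to $-1$; each $O_i$ then has trace $O(1)$ and eigenvalues $\pm 1$, matching the hard-instance requirements in~\eqref{equ:hard-case-general}.

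For a fixed unit vector $\qbit{\psi}$, unitary invariance makes $U_i^\dagger\qbit{\psi}$ Haar-distributed on the unit sphere, so~\eqref{equ:haar-trace-moment} with $k=2$ gives
\[
\expect{\matdotprod{\psi}{O_i}{\psi}^2} = \frac{\Tr[D]^2+\Tr[D^2]}{\dims(\dims+1)} = \frac{1}{\dims+1},
\]
which already accounts for the $\frac{1}{\dims+1}$ term in the claimed bound. The remaining terms have to come from the fluctuation of the empirical average around this mean, uniformly in $\qbit{\psi}$.

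For the uniform control I would use the tensor-product identity
\[
\frac{1}{\nobs}\sum_{i=1}^{\nobs}\matdotprod{\psi}{O_i}{\psi}^2 = \Tr\!\left[\Lambda\,(\qproj{\psi})^{\otimes 2}\right], \qquad \Lambda \eqdef \frac{1}{\nobs}\sum_{i=1}^{\nobs} O_i \otimes O_i,
\]
so that $\sup_{\qbit{\psi}}$ of the left-hand side is at most $\opnorm{\Pi_{\text{sym}}\Lambda\Pi_{\text{sym}}}$. A Haar twirl computes $\Pi_{\text{sym}}\expect{\Lambda}\Pi_{\text{sym}} = \frac{1}{\dims+1}\Pi_{\text{sym}}$, and matrix Bernstein (or higher-order matrix moment methods) on the symmetric subspace, whose dimension is $\binom{\dims+1}{2}$, controls the deviation $\opnorm{\Pi_{\text{sym}}(\Lambda-\expect{\Lambda})\Pi_{\text{sym}}}$.

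The main obstacle is to match the precise scaling $O(\log\min\{\nobs,\dims\}/\min\{\nobs,\dims\})$ in both regimes. When $\nobs\ge \dims$, a careful Bernstein bound must exploit the small pointwise variance $\Var(\matdotprod{\psi}{O}{\psi}^2) = O(1/\dims^2)$, which is computable from the Haar moments in the spirit of~\cref{thm:uniform-povm-moment-bound}, in order to beat the naive $\sqrt{\log\dims/\nobs}$ factor produced by off-the-shelf matrix Bernstein. When $\nobs<\dims$, I would instead switch to an $\eps$-net argument on the unit sphere (covering number $(3/\eps)^{2\dims}$) combined with scalar Bernstein at each net point and the Lipschitz transfer $|\matdotprod{\psi}{O}{\psi}^2-\matdotprod{\psi'}{O}{\psi'}^2|\le 4\|\qbit{\psi}-\qbit{\psi'}\|$; balancing the net scale $\eps$ against $\nobs$ and $\dims$ produces the $O(\log\nobs/\nobs)$ contribution. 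Tracking all constants carefully, including the geometric $\pi^3$-type factors from Haar-measure concentration on the unit sphere, is where the bulk of the technical work---and the explicit constant $2304\pi^3$---arises.
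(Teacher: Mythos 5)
Your construction ($O_i = U_i^\dagger Z U_i$ with $Z$ balanced $\pm 1$-diagonal, $U_i$ i.i.d.\ Haar) and your computation of $\expect{\matdotprod{\psi}{O_i}{\psi}^2}=\frac{1}{\dims+1}$ match the paper exactly, and your $\eps$-net + scalar Bernstein route for the regime $\nobs<\dims$ is essentially the paper's argument. Where you diverge is the regime $\nobs\ge\dims$, and there your proposed route has a genuine gap.

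Passing to $\Lambda=\frac{1}{\nobs}\sum_i O_i\otimes O_i$ and applying matrix Bernstein on the symmetric subspace cannot give the claimed rate. Since $O_i^2=\eye_\dims$, we have $(O_i\otimes O_i)^2=\eye\otimes\eye$, so the matrix variance $\opnorm{\expect{(O\otimes O)^2}-(\expect{O\otimes O})^2}$ restricted to $\Pi_{\mathrm{sym}}$ is $1-\tfrac{1}{(\dims+1)^2}=\Theta(1)$, not $O(1/\dims^2)$. The pointwise scalar variance $\Var(\matdotprod{\psi}{O}{\psi}^2)=O(1/\dims^2)$ that you invoke is a statement about a single rank-one compression $\matdotprod{\psi\otimes\psi}{\cdot}{\psi\otimes\psi}$, and it does not feed into the operator variance that matrix Bernstein actually sees. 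Consequently the best matrix Bernstein gives is $\sqrt{\log\dim(\Pi_{\mathrm{sym}})/\nobs}\approx\sqrt{\log\dims/\nobs}$, which at $\nobs\approx\dims$ is $\sqrt{\log\dims/\dims}$, much larger than the target $O(\log\dims/\dims)$. There is also a second, unaddressed lossiness: the left-hand side of the lemma is a supremum only over the ``diagonal'' vectors $\qbit{\psi}^{\otimes 2}$, whereas $\opnorm{\Pi_{\mathrm{sym}}\Lambda\Pi_{\mathrm{sym}}}$ is a supremum over the full $\binom{\dims+1}{2}$-dimensional symmetric subspace, which can be strictly bigger.

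The paper avoids both issues by using the $\eps$-net argument uniformly across both regimes (Equation~(106) of~\cite{ChenCH021}), choosing $\eta=1/\min\{\nobs,\dims\}$ and $t=\lambda\cdot 8\dims\log(1+2\min\{\nobs,\dims\})/\min\{\nobs,\dims\}$ with $\lambda=288\pi^3/\dims$. The crucial point is that this choice forces $t/\lambda\ge 1$, so the scalar Bernstein tail $\exp(-\nobs\min\{t^2/\lambda^2,\,t/\lambda\})$ lands in its \emph{linear} (sub-exponential) regime $\exp(-\nobs t/\lambda)$, which is stronger than the sub-Gaussian regime that a naive matrix Bernstein bound reflects. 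Here the small per-vector sub-exponential parameter $\lambda=O(1/\dims)$ is what actually carries the argument—precisely the quantity that matrix Bernstein's operator variance fails to capture. If you want to salvage your two-case structure, the fix is to drop matrix Bernstein in the $\nobs\ge\dims$ case as well and run the same $\eps$-net + scalar Bernstein argument there, only changing the net scale from $\eta=1/\nobs$ to $\eta=1/\dims$ and keeping $t$ in the linear-tail regime; that is exactly what the paper does.
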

\begin{remark}
    The original statement has $\nobs$ in place of all the $\min\{\nobs, \dims\}$. We believe that the original proof has a minor error which does not extend to the case when $M\ge \dims$ (the authors of \cite{ChenCH021} only claim that when $M\ge \dims$ a lower bound of $M/\log(M)$ would not hold, but did not seem to prove that a lower bound of $\dims$ would hold in this regime). We propose a simple fix that leads to a slightly different bound.
As a result, the sample complexity lower bound for single-copy shadow tomography would be $\Omega(\min\{\nobs/\log\nobs, \dims/\log\dims\}/\eps^2)$ instead of $\Omega(\min\{\nobs/\log\nobs, \dims\}/\eps^2)$ as claimed by \cite{ChenCH021}.
\end{remark}
Since $i$ is chosen uniformly from $[\nobs]$, combining with \eqref{equ:lb-almost-final} and \cref{lem:emd-tv-distance}, we have
\[
\EMD{\expectDistrOf{\sigma\sim\mathcal{D}}{\p_{\sigma}^{x^n}}}{\p_{\rho}^{x^n}}\le \sum_{i=1}^\ns\expectDistrOf{\sigma\sim\cD}{\totalvardist{\p_{\sigma}^i}{\p_{\rho}^i}}\le \bigO{\frac{\ns\eps}{\sqrt{\min\{\nobs/\log\nobs, \dims/\log\dims\}}}}.
\]
By \cref{lem:emd}, we must have $\gamma\le c\eps/\min\{\nobs/\log\nobs, \dims/\log\dims\}$ for some universal constant $c$, otherwise testing between the two hypotheses would not be possible. Rearranging the terms leads to the error lower bound of $\eps=\Omega(\gamma\sqrt{\min\{\nobs/\log\nobs, \dims/\log\dims\}})$.

\subsection{Lower bound for directly measuring the observables}
\label{sec:lb-direct}
\begin{theorem}
    Let $\Obs_1, \ldots, \Obs_{\nobs}$ a set of Pauli observables.   If a non-adaptive algorithm measures each copy with one of the measurements $\POVM_i\eqdef\{\frac{\eye_\dims - O_i}{2}, \frac{\eye_\dims +O_i}{2}\}$, it must incur an error of $\eps=\Omega(\gamma \nobs)$ under $\gamma$-adversarial corruption.
\end{theorem}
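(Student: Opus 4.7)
The strategy mirrors the general lower bound from the preceding section, but exploits the fact that a direct measurement $\POVM_j=\{(\eye_\dims-O_j)/2,(\eye_\dims+O_j)/2\}$ only probes the $O_j$ direction of the state. I would choose $\Obs_1,\ldots,\Obs_{\nobs}$ to be distinct non-identity Pauli operators (possible whenever $\nobs\le \dims^2-1$, and the bound $\eps=\Omega(\gamma\nobs)$ is trivial otherwise since expectation values live in $[-1,1]$); these are traceless, $\pm 1$-valued, and satisfy the orthogonality relation $\Tr[O_iO_j]=\dims\cdot\indicator{i=j}$. Then I set up the same one-versus-many testing problem as in \eqref{equ:hard-case-general}: $H_0:\rho=\qmm$ against $H_1:\rho=\sigma_i=(\eye_\dims+3\eps O_i)/\dims$ for a uniformly random $i\in[\nobs]$. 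As before, any $\eps$-accurate shadow tomography algorithm must distinguish these two hypotheses with constant probability.

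Next I apply \cref{lem:emd,lem:emd-tv-distance} to bound the earth-mover distance. Fix a non-adaptive schedule assigning to each copy $j\in[\ns]$ a measurement $\POVM_{k(j)}$ with $k(j)\in[\nobs]$. The outcome distribution under $\sigma_i$ is Bernoulli with bias $\frac{1}{2}(1+3\eps\Tr[O_iO_{k(j)}]/\dims)$, which by the Pauli orthogonality relation equals $\frac{1}{2}(1+3\eps)$ if $k(j)=i$ and $\frac{1}{2}$ otherwise; the outcome under $\qmm$ is always uniform on $\{\pm 1\}$. Consequently,
\[
\expectDistrOf{i\sim[\nobs]}{\totalvardist{\p_{\sigma_i}^j}{\p_{\qmm}^j}}=\frac{1}{\nobs}\cdot\frac{3\eps}{2}.
\]
Summing over $j$ and invoking \cref{lem:emd-tv-distance},
\[
\EMD{\expectDistrOf{i\sim[\nobs]}{\p_{\sigma_i}^{x^\ns}}}{\p_{\qmm}^{x^\ns}}\le \frac{3\eps\ns}{2\nobs}.
\]

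Finally, \cref{lem:emd} tells us that if $\frac{3\eps\ns}{2\nobs}\le \frac{\gamma\ns}{2}$, i.e.\ $\eps\le \gamma\nobs/3$, the two hypotheses are indistinguishable with constant probability, and hence no algorithm using these restricted measurements can achieve error smaller than $\Omega(\gamma\nobs)$. The main conceptual point (and, arguably, the whole reason the bound is so much weaker than the $\sqrt{\nobs}$ bound of \cref{thm:result-lower}) is that the measurement $\POVM_j$ is statistically blind to every coordinate of the state other than $O_j$; averaging over a random planted observable $O_i$ therefore incurs a free $1/\nobs$ factor in the per-copy total variation. The only minor technical step is ensuring the construction fits within valid density matrices, which requires $\eps\le 1/3$ (so that $\sigma_i\succeq 0$); outside this regime the claimed bound is again vacuous because individual expectation values are trivially $O(1)$.
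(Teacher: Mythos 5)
Your proof is correct and follows essentially the same route as the paper's: identical hard instance $\sigma_i=(\eye_\dims+3\eps O_i)/\dims$, the same reduction via \cref{lem:emd} and \cref{lem:emd-tv-distance}, and a direct per-copy total-variation computation exploiting Pauli orthogonality to get the $1/\nobs$ factor. You are in fact slightly more careful than the paper on minor points (distinctness and non-identity of the Paulis, the factor-of-two in the Bernoulli TV distance, and the trivial regimes $\nobs>\dims^2-1$ and $\eps>1/3$), but the argument is the same.
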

\begin{proof}
    The hard states are defined as~\eqref{equ:hard-case-general}, and recall that in this definition $\sigma_i=(\eye_\dims + 3\eps\Obs_i)/\dims$.
    Given $\POVM_j$, we have $\p_i=\bernoulli{\frac{1}{2}}$ for $i\ne j$ and $\bernoulli{\frac{1+3\eps}{2}}$ if $i=j$. Furthermore, $\p_{\qmm}=\bernoulli{1/2}$. Thus, 
    \[
    \totalvardist{\p_i}{\p_\qmm}= 3\eps\indic{i=j}.
    \]
    We can directly compute the expected total variation distance as required by~\cref{lem:emd-tv-distance} instead of going through Pinsker's inequality.
    \[
    \expectDistrOf{i\sim[\nobs]}{\totalvardist{\p_i}{\p_\qmm}}=\frac{3\eps}{\nobs}.
    \]
    Combining with~\cref{lem:emd}, we must have $\gamma\le 3\eps/\nobs$, which proves the theorem.
\end{proof}
\begin{remark}
    For $\nobs\le 3^\nqubits$, the theorem also applies to the case when the measurements are chosen as the basis of each observable, i.e., for $\Obs_i=\sum_{j=1}^\dims \lambda_j^{(i)}\qproj{\psi_j\supparen{i}}$, we use the projective measurement $\POVM_i=\{\qproj{\psi_j\supparen{i}}\}_{j\in[\dims]}$. We can choose observables $\Obs_1, \ldots, \Obs_{\nobs}$ as Pauli observables with no idendity components (e.g. $X^{\otimes\nqubits}$). Since Pauli observables are orthogonal, for each $\POVM_j$ we also have $\totalvardist{\p_i}{\p_\qmm}= 3\eps\indic{i=j}$ (here $\p_i$ and $\p_{\qmm}$ are discrete distributions over $\dims$ elements). The remaining steps follows similarly.
\end{remark}

\subsection{Proof of~\cref{lem:max-obs-square}}
We choose $O_i=U_i^\dagger Z U_i$ where $Z$ is a diagonal matrix with half $+1$ entries and half $-1$ entries, and $U_i$ independently sampled from the Haar measure.
Equation (106) of~\cite{ChenCH021} ensures that for all $0<\eta< 1$ and $t>0$, 
\[
\probaOf{\exists \qbit{\psi}, \left|\frac{1}{\nobs}\sum_{i=1}^{\nobs}\matdotprod{\psi}{U_i^\dagger Z U_i}{\psi}-\frac{1}{\dims+1}\right|\ge t+4\eta}\le 2(1+2/\eta)^{2\dims}\exp\Paren{-{\nobs}\min\left\{\frac{t^2}{\lambda^2},\frac{t}{\lambda}\right\}},
\]
where $\lambda = 288\pi^3/\dims$. The next step is to choose $\eta$ and $t$ appropriately to ensure the above probability is upper bounded by constant strictly less than 1. We replace Equation (107) of~\cite{ChenCH021} to the following,
\[
\eta = \frac{1}{\min\{\nobs, \dims\}}, \quad t=\lambda\cdot \frac{8\dims\log(1+2\min\{\nobs,\dims\})}{\min\{\nobs, \dims\}}.
\]
This choice ensures that $t/\lambda\ge 1$ and thus $\min\left\{\frac{t^2}{\lambda^2},\frac{t}{\lambda}\right\}=t/\lambda$ , which is crucial to canceling the $\nobs$ factor in the exponent. Substituting the new choice of parameters, we have
\begin{align*}
2(1+2/\eta)^{2\dims}\exp\Paren{-{\nobs}\min\left\{\frac{t^2}{\lambda^2},\frac{t}{\lambda}\right\}}&=2\exp\Paren{2\dims\log(1+2\min\{\nobs, \dims\})-\frac{M}{\min\{\nobs, \dims\}}8\dims(1+2\min\{\nobs, \dims\})}\\
&\le 2\exp \Paren{-6\dims\log(1+2\min\{\nobs, \dims\})}\\
&\le 2\exp(-2).
\end{align*}
By a probabilistic argument, there must exists a choice of $U_1, \ldots U_\nobs$ such that for all pure state $\qbit{\psi}$
\[
\frac{1}{\nobs}\sum_{i=1}^{\nobs}\matdotprod{\psi}{U_i^\dagger Z U_i}{\psi}\le t+4\eta +\frac{1}{\dims+1}=\bigO{\frac{\log\nobs}{\nobs}+\frac{\log\dims}{\dims}},
\]
completing the proof. 

\section{Shadow tomography algorithm}\label{sec:shadow-tomograghy}
\ynote{Put the proof for truncated mean in a separate subsection}
In this section we are going to show the robust shadow tomography algorithm.

\newtheorem*{restatemainthm}{\Cref{thm:result-algorithm} (Restate)}
\begin{restatemainthm}
    Under $\gamma$-adversarial corruption, there exists a shadow tomography algorithm such that for all observables $\Obs_1, \ldots, \Obs_{\nobs}$, achieves an error of at most $\eps=\bigO{\gamma\log(1/\gamma){\max_{j\in[\nobs]}\hsnorm{O_j}}}$ with probability at least $1-\delta$. The number of copies used by the algorithm is at most
    \[
    \ns = O\left({\frac{\max_{i\in[\nobs]}\hsnorm{O_i}^2}{\eps^2}\log^2(\frac{1}{\eps})\log\frac{\nobs}{\delta}}\right)
    =O\left(\frac{1}{\gamma^2}\log\frac{\nobs}{\delta}\right).
    \]
\end{restatemainthm}
\subsection{Bounded central moment}

\begin{lemma}
\label{lemma-h-central-moment}
For all integers $h\geq 2$, there exist a constant $C'>0$ for all Hermitian matrix $O \in \HH_d$ such that
$$\expectDistrOf{v\sim\cD(\rho)}{(\bra{v}O\ket{v}-\expectDistrOf{v\sim\cD(\rho)}{\bra{v}O\ket{v}})^h}\leq \frac{(C'h)^h}{(d+1)^h} \hsnorm{O}^h$$
\end{lemma}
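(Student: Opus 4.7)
The plan is to evaluate the $h$-th central moment exactly via the Schur--Weyl / Haar-moment formula and then exploit a centering condition to kill the $\sqrt{d}$ factor that would otherwise arise from traces of the fixed-point cycles of the permutation.

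\textbf{Step 1: Recenter.} Let $\mu := \expectDistrOf{v\sim\cD(\rho)}{\matdotprod{v}{O}{v}} = \Tr[O\Sigma_\rho] = (\Tr[O]+\Tr[O\rho])/(d+1)$ by~\eqref{equ:sigma-rho}, and set $\tilde O := O - \mu\eye_d$. Since $v$ is a unit vector, $\matdotprod{v}{\tilde O}{v} = \matdotprod{v}{O}{v} - \mu$, so the quantity of interest equals $\expectDistrOf{v\sim\cD(\rho)}{\matdotprod{v}{\tilde O}{v}^h}$. By construction $\Tr[\tilde O\,\Sigma_\rho]=0$, i.e., the crucial identity $\Tr[\tilde O] = -\Tr[\tilde O\rho]$ holds. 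A short calculation using $|\Tr[O]|\le\sqrt{d}\,\hsnorm{O}$ and $|\Tr[O\rho]|\le\opnorm{O}\le\hsnorm{O}$ bounds $\hsnorm{\tilde O}\le C_0\hsnorm{O}$ for a universal constant $C_0$, so it suffices to prove the bound with $\hsnorm{\tilde O}$ in place of $\hsnorm{O}$.

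\textbf{Step 2: Schur--Weyl expansion.} Since $\cD(\rho)$ has density $d\matdotprod{v}{\rho}{v}$ against the Haar measure on the sphere,
$$\expectDistrOf{v\sim\cD(\rho)}{\matdotprod{v}{\tilde O}{v}^h} = d\cdot\Tr\!\left[(\rho\otimes\tilde O^{\otimes h})\,\expectDistrOf{v\sim\Haar{d}}{\proj{v}^{\otimes(h+1)}}\right].$$
Inserting~\eqref{equ:haar-k-moment} at order $h+1$ and decomposing $P_{\text{Sym}^{(h+1)}}=\frac{1}{(h+1)!}\sum_{\pi\in\Sim_{h+1}}P_\pi$ into permutation operators, the trace of $(\rho\otimes\tilde O^{\otimes h})P_\pi$ factorizes over the cycles of $\pi$. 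Writing $\ell_0(\pi)$ for the length of the cycle containing the index $0$ (the slot carrying $\rho$), this yields
$$\expectDistrOf{v\sim\cD(\rho)}{\matdotprod{v}{\tilde O}{v}^h} = \frac{1}{\prod_{i=1}^h(d+i)}\sum_{\pi\in\Sim_{h+1}}\Tr\!\left[\rho\,\tilde O^{\ell_0(\pi)-1}\right]\!\!\prod_{\substack{c\in\cycle(\pi)\\ 0\notin c}}\!\!\Tr\!\left[\tilde O^{|c|}\right].$$

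\textbf{Step 3: Cycle-wise bounds, with centering rescuing the fixed points.} Assign each cycle an \emph{effective length} equal to its own length if it does not contain $0$, and to its length minus one otherwise; the effective lengths sum to $h$. For a non-$0$ cycle of length $k\ge 2$, Schatten-norm monotonicity $\|\cdot\|_k\le\|\cdot\|_2$ gives $|\Tr[\tilde O^k]|\le\hsnorm{\tilde O}^k$; for the $0$-cycle of length $k\ge 2$, H\"older gives $|\Tr[\rho\,\tilde O^{k-1}]|\le\opnorm{\tilde O}^{k-1}\le\hsnorm{\tilde O}^{k-1}$. The delicate case is a fixed point of $\pi$ outside the $0$-cycle, contributing $\Tr[\tilde O]$: the naive bound $|\Tr[\tilde O]|\le\sqrt{d}\,\hsnorm{\tilde O}$ would be fatal, but the centering identity $\Tr[\tilde O]=-\Tr[\tilde O\rho]$ together with H\"older yields $|\Tr[\tilde O]|\le\opnorm{\tilde O}\le\hsnorm{\tilde O}$. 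The singleton $\{0\}$ itself contributes $\Tr[\rho]=1$, matching effective length $0$. Multiplying the cycle bounds, every summand has modulus at most $\hsnorm{\tilde O}^h$. This fixed-point treatment is the heart of the proof; without it one inherits a spurious $d^{h/2}$ factor and the statement fails.

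\textbf{Step 4: Aggregate.} The sum has $(h+1)!$ terms, the denominator satisfies $\prod_{i=1}^h(d+i)\ge(d+1)^h$, and Stirling's formula gives $(h+1)!\le(Ch)^h$ for all $h\ge 2$. Consequently,
$$\left|\expectDistrOf{v\sim\cD(\rho)}{\matdotprod{v}{\tilde O}{v}^h}\right| \le \frac{(h+1)!\,\hsnorm{\tilde O}^h}{(d+1)^h} \le \left(\frac{C'h\,\hsnorm{O}}{d+1}\right)^h$$
for a universal constant $C'$, and the (two-sided) estimate implies the stated one-sided inequality of the lemma.
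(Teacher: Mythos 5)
Your proof is correct, and it takes a genuinely different route from the paper. The paper first replaces $O$ by its traceless part $\overline{O}=O-\tfrac{\Tr[O]}{d}\eye_d$, splits $(\matdotprod{v}{O}{v}-\mu)^h \le 2^{h-1}\bigl(\matdotprod{v}{O}{v}^h + \mu^h\bigr)$ via convexity for even $h$, invokes the hypercontractivity bound (their Theorem~\ref{thm:hypercontractive}, imported from \cite{AliakbarpourBCL2025robustquantum}) for the first piece, bounds the second directly, and then handles odd $h$ by a Cauchy--Schwarz reduction to the even case $2h$. You instead recenter by the \emph{exact mean} $\mu=\Tr[O\Sigma_\rho]$, which makes the quantity genuinely a central moment of $\matdotprod{v}{\tilde O}{v}$ and, crucially, produces the identity $\Tr[\tilde O]=-\Tr[\tilde O\rho]$; you then expand that central moment directly via Schur--Weyl at order $h+1$ and bound each permutation term cycle by cycle, with the recentering identity taming the otherwise-$\sqrt{d}$-sized fixed-point factors $\Tr[\tilde O]$. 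Your approach buys a self-contained, unified argument for all $h\ge 2$ with no external hypercontractivity lemma and no Jensen split/odd-even case distinction; the paper's approach is shorter \emph{given} that the hypercontractivity theorem is already proved elsewhere (and that theorem presumably hides a similar Schur--Weyl computation internally). One small polish: when you assign effective lengths, it is worth stating explicitly that a non-$0$ singleton cycle is the only place the naive Schatten bound fails (for $k\ge 2$ both $|\Tr[\tilde O^k]|\le\hsnorm{\tilde O}^k$ and $|\Tr[\rho\tilde O^{k-1}]|\le\hsnorm{\tilde O}^{k-1}$ are unconditional), which is exactly why the single centering identity suffices; you do say this, but it is the load-bearing observation and deserves the emphasis you give it.
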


\begin{proof}
Let $\overline{O}=O-\frac{\Tr[O]}{d}\eye_\dims$ be the traceless part of the Hermitian matrix $O$. We can show that
$$\bra{v}O\ket{v}-\expectDistrOf{v\sim\cD(\rho)}{\bra{v}O\ket{v}}=\bra{v}\overline{O}\ket{v}-\expectDistrOf{v\sim\cD(\rho)}{\bra{v}\overline{O}\ket{v}}.$$
Notice that the function $f(x)=x^h$ is convex on $x\in (-\infty,\infty)$ for all even integer $h\ge 2$. By Jensen’s inequality, $f(\frac{a+b}{2})\le \frac{f(a)+f(b)}{2}$, we have
$$\left[\bra{v}O\ket{v}+\expectDistrOf{v\sim\cD(\rho)}{\bra{v}O\ket{v}}\right]^h \le 2^{h-1}\left[\bra{v}O\ket{v}^h+(\expectDistrOf{v\sim\cD(\rho)}{\bra{v}O\ket{v}})^h\right]$$
Since $h$ is even, we have for all $a,b \in \mathbb{R}$, $(a-b)^h\leq (|a|+|b|)^h \leq 2^{h-1}(|a|^h+|b|^h)=2^{h-1}(a^h+b^h).$ By taking the expectation with respect to $v\sim\cD(\rho)$, we have for all even integer $h \ge 2$,
\begin{equation}
\label{eq:h-moment-convexity-1}
\expectDistrOf{v\sim\cD(\rho)}{
\left[\bra{v}O\ket{v}-\expectDistrOf{v\sim\cD(\rho)}{\bra{v}O\ket{v}}\right]^h} \le 2^{h-1}\,\expectDistrOf{v\sim\cD(\rho)}{\bra{v}O\ket{v}^h+(\expectDistrOf{v\sim\cD(\rho)}{\bra{v}O\ket{v}})^h}
\end{equation}
The first term on the RHS of (\ref{eq:h-moment-convexity-1}) can be bounded using the hypercontractivity of $\cD(\rho)$,
\begin{theorem}[Theorem 5.7~\cite{AliakbarpourBCL2025robustquantum}] 
   \label{thm:hypercontractive} 
For all integers $h\ge 2$, $\cD(\rho)$ satisfies $C$-hypercontractivity, i.e., there exists constant $C>0$ for all Hermitian matrix $O$, 
\[
\expectDistrOf{v\sim\cD(\rho)}{\matdotprod{v}{O}{v}^{h}}^2\le \frac{(Ch)^{2h}}{\dims^{2h}}(\Tr[O^2]+\Tr[O]^2)^h.
\]
\end{theorem}

Using this result, we have
\begin{align*}
2^{h-1}\,\expectDistrOf{v\sim\cD(\rho)}{\bra{v}O\ket{v}^h}&\le 2^{h-1}\frac{(Ch)^h}{d^h}(\Tr[\overline{{O}}^2]+\Tr[\overline{O}]^2)^{h/2}\\
& = 2^{h-1}\frac{(Ch)^h}{d^h}\left(\Tr[\overline{{O}}^2]^{1/2}\right)^{h}\\
& = 2^{h-1}\frac{(Ch)^h}{d^h} \hsnorm{\overline{O}}^h.
\end{align*}
The second term on the RHS of (\ref{eq:h-moment-convexity-1}) can be calculated:
\begin{align*}
2^{h-1}\,(\expectDistrOf{v\sim\cD(\rho)}{\bra{v}O\ket{v}})^h=2^{h-1}\, \left(\frac{\Tr[\overline{O}]+\Tr[{\overline{O}\rho}]}{(d+1)}\right)^h &=2^{h-1}\, \left(\frac{\Tr[{\overline{O}\rho}]}{(d+1)}\right)^h\\
& \le \frac{2^{h-1}}{(d+1)^h}\, \|\rho\|_1^h \cdot \|\overline{O}\|_\infty^h\\
& = \frac{2^{h-1}}{(d+1)^h}\, \|\overline{O}\|_\infty^h\\
& \le \frac{2^{h-1}}{(d+1)^h}\, \|\overline{O}\|_{\text{HS}}^h.
\end{align*}
Using the triangular inequality $\hsnorm{\overline{O}}\le \hsnorm{O}+\hsnorm{\frac{\Tr(O)}{d}\eye_\dims}$ and the Schattern-$p$ norm property $\|O\|_1\le \sqrt{\text{rank}(O)}\|O\|_2$, we can bound the Hilbert-Schmidt norm of a Hermitian matrix $\overline{O}$ by $\hsnorm{\overline{O}}\le2\hsnorm{O}.$
Therefore, (\ref{eq:h-moment-convexity-1}) is bounded by 
\begin{align}
\label{eq:h-moment-convexity-2}
\expectDistrOf{v\sim\cD(\rho)}{
\left[\bra{v}O\ket{v}-\expectDistrOf{v\sim\cD(\rho)}{\bra{v}O\ket{v}}\right]^h} 
&\le 
2^{h-1}\left(\frac{(Ch)^h}{d^h} 
+\frac{1}{(d+1)^h} \right) \|O\|_{\text{HS}}^h \nonumber \\
& \le 2^{h-1}\left(\frac{2^h h^h(C^h+1)}{(d+1)^h} \right)  \hsnorm{O}^h \nonumber \\
& \le \frac{(C'h)^h}{(d+1)^h}  \hsnorm{O}^h 
\end{align}
In the last step, we choose constant $C'$ such that $2^{2h-1}(C^h+1)\le (C')^h$.

To complete the proof, we need to further show that the inequality (\ref{eq:h-moment-convexity-2}) holds for odd $h$. In a probability theory setting, Jensen’s inequality for a convex function $f(x)=x^2$ can be written as $(\mathbb{E}{X})^2\le \mathbb{E}{(X^2)}$, where $X$ is a random variable. Let $X=|\bra{v}O\ket{v}-\expectDistrOf{v\sim\cD(\rho)}{\bra{v}O\ket{v}}|^h$. Applying the Jensen’s inequality, we have
\begin{align*}
\expectDistrOf{v\sim\cD(\rho)}{
|\bra{v}O\ket{v}-\expectDistrOf{v\sim\cD(\rho)}{\bra{v}O\ket{v}}|^h} 
&\le \sqrt{\expectDistrOf{v\sim\cD(\rho)}{
|\bra{v}O\ket{v}-\expectDistrOf{v\sim\cD(\rho)}{\bra{v}O\ket{v}}|^{2h}} } \\
& \le \sqrt{\frac{(C'2h)^{2h}}{(d+1)^{2h}}  \hsnorm{O}^{2h}} \\
& = \frac{(C'2h)^h}{(d+1)^h}  \hsnorm{O}^h. 
\end{align*}
In the second line we use the inequality (\ref{eq:h-moment-convexity-2}) by setting $h\to2h$. Therefore, by choosing $C'\to 2C'$, we complete the proof for the odd $h$ case.
\end{proof}
\subsection{Proof of Theorem \ref{thm:result-algorithm}}
\begin{proof}
\algnewcommand\algorithmicinput{\textbf{input:}}
\algnewcommand\Input{\item[\algorithmicinput]}

\algnewcommand\algorithmicgoal{\textbf{goal:}}
\algnewcommand\Goal{\item[\algorithmicgoal]}

We describe the robust classical shadow tomography algorithm in Algorithm~\ref{alg:robust_classical_shadow}.

\begin{algorithm}[h]
\caption{Robust classical shadow tomography}
\label{alg:robust_classical_shadow}
\begin{algorithmic}[1]
\Input The corruption parameter $\gamma$, the input state $\{\rho_i\}_{i=1}^n$ which is $n$ copies of $\rho$ with $\gamma$-corruption ($n$ will be determined later in the analysis), the observables $\{O_j\}_{j=1}^{M}$.
\Goal Return $\{\hat{o}_j\}_{j=1}^{M}$ such that $|\hat{o}_j-\tr(\Obs_j\rho)|\le O(\gamma)$ for all $j\in[M]$. 
\For{$i\in[n]$}
\State Sample a unitary $U_i$ from Haar random unitary.
\State Apply $U_i$ on $\rho_i$: $\rho_i\to \rho_i'=U_i\rho U_i^\dagger$.
\State Measure $\rho_i'$ in the computational basis. Obtain $b_i$.
\State Define $\ket{v_i}:=U^{\dagger}_i\ket{b_i}.$

\State Calculate the classical description of $\hat{\rho}_i:=(d+1)(\ket{v_i}\bra{v_i})-\mathbb{I}_d$.
\EndFor
\For{$j\in[M]$}
    \For{$i\in[n]$}
    \State Compute $o_j^i:=\tr(O_j\hat{\rho}_i)$.
    \EndFor
\State $\hat{o}_j \leftarrow \mathtt{TruncatedMean}(\{o_j^i\}_{i=1}^{n},2\gamma)$. \Comment{$\mathtt{TruncatedMean}(\cdot)$ is defined in Algorithm~\ref{alg:truncated-mean}.} 

\EndFor
\Return $\{\hat{o}_j\}_{j=1}^{M}$
\end{algorithmic}
\end{algorithm}

    Fix $j\in[M]$. Define the random variable
    $$X_j := \frac{(d+1)\langle v|O_j|v\rangle - \Tr[O_j]}{C' h \, \|O_j\|_{HS}},$$
    where $v\sim\mathcal{D}(\rho)$, and denote its mean by $$\mu_j := \mathbb{E}_{v\sim\mathcal{D}(\rho)}[X_j]=\frac{(d+1)\expectDistrOf{v\sim\cD(\rho)}{\bra{v}O_j\ket{v}}-\Tr[O_j]}{C'h \hsnorm{O}}=\frac{\Tr[O_j\rho]}{C'h \hsnorm{O}}.$$
    Lemma \ref{lemma-h-central-moment} implies that for any $h\ge2$ the $h$-th central moment of $X_j$ is bounded,
    $$\expectDistrOf{v\sim\cD(\rho)}{|X_j-\mu_j|^h}\leq 1$$
    According to Lemma \ref{lemma-bounded-central-moment}, for any random variable $X_j'$ whose distribution $\cD_{X_j'}$ is a $\remove$-removing of $\cD(\rho)$, where $0<\remove<1/2$ is a constant, it holds that $\eta_j:=\big|\expect{X_j'}-\expect{X_j}\big|\le O(\remove^{1-\frac{1}{h}})$.
    Choose $\remove=4\corruption$ and $h\ge \log(1/\corruption)$. 
    We have
    $$\big|\expect{X_j'}-\expect{X_j}\big|\le O(\gamma).$$
    Draw $n$ independent samples $v_1,\dots,v_n\stackrel{\mathrm{iid}}{\sim}\mathcal{D}(\rho)$, and 
    define
    $$X_{j,i} := \frac{(d+1)\langle v_i|O_j|v_i\rangle - \Tr[O_j]}{C' h \, \|O_j\|_{HS}}\qquad\text{and}\qquad S_j := \{X_{j,1},\dots,X_{j,n}\}.$$
    We can view the attack in which the adversary  corrupts $\corruption$-fraction of $\rho$ as corrupting $\corruption$-fraction of $S_j$.

    Let $S_j^{\corruption}$ be the $\corruption$-corruption of $S_j$, and $S_j^{\remove/2}$ be the $\remove/2$-truncation of $S_j^{\corruption}$.
    By Lemma~\ref{lemma-robustness-estimation}, if $n \ge O\big(\tfrac{\log(1/\delta')}{(\remove-\gamma)^2}\big) = O\big(\tfrac{\log(1/\delta')}{\gamma^2}\big)$, $S_j^{\remove/2}$ has empirical mean $\mathtt{TruncatedMean}\left(\{X_{j,i}\}_{i=1}^{n}, 2\gamma\right)$ (defined in Algorithm \ref{alg:truncated-mean}) that satisfies
    \begin{equation}
    \label{eq:thm-sample-complexity-1}
        \left|\mathtt{TruncatedMean}\left(\{X_{j,i}\}_{i=1}^{n},2\gamma \right)-\mu_j\right|\le \eta_i=O(\gamma)
    \end{equation}
    with probability greater than $1-\delta'$.

    By multiplying $C' h \, \|O_j\|_{HS}$ on both side of equation (\ref{eq:thm-sample-complexity-1}), we have 
    \begin{equation}
    \label{eq:thm-sample-complexity-2}
        \left|\hat{o}_j
        -\Tr[O_j\rho]\right|\le O\left(h \, \|O_j\|_{HS}\gamma\right)=O\left(\|O_j\|_{HS}\gamma\log(\frac{1}{\gamma})\right),
    \end{equation}
    where $\hat{o}_j:=\mathtt{TruncatedMean}\left(\{(d+1)\langle v_i|O_j|v_i\rangle - \Tr[O_j]\}_{i=1}^{n}, 2\gamma\right)$.

    As a result, for a fixed $j\in[M]$, it holds that
    \[
    \left|\hat{o}_j
        -\Tr[O_j\rho]\right|\le \varepsilon_j
    \]
    with probability greater than $1-\delta'$, where $\varepsilon_j:=\left(\|O_j\|_{HS}\gamma\log(\frac{1}{\gamma})\right)$.
    
    Choosing $\delta'=\delta/M$, by union bound, we have that for all $j\in[M]$, the the estimation $\hat{o}_j$ has an additive error smaller than $\varepsilon:=\max_j\varepsilon_j=O(\max_j\|\Obs_j\|_{HS}\gamma\log(1/\gamma))$ with probability greater than $1-\delta$, and the sample complexity is $n=O(\frac{1}{\gamma^2}\log(\frac{M}{\delta}))=O\left(\frac{\max_j\|\Obs_j\|_{HS}^2}{\epsilon^2}\log^2{\frac{1}{\eps}}\log(\frac{M}{\delta})\right)$. 
\end{proof}

\begin{definition}[Quantile]
\label{def:quantile}
Let $X$ be a random variable and $q\in[0,1]$. 
We say $t$ is the $q\operatorname{-quantile}$ of $X$ if $t$ is the infimum over $t\in R$ such that $\Pr[X\le t]\ge q$.
Let $S$ be a multiset. 
We say $t$ is the $q\operatorname{-quantile}$ of $S$ if $t$ is the $q\operatorname{-quantile}$ of the uniform distribution over $S$.
\end{definition}

\begin{definition}[Remove mass of a distribution\yc{wording}]
    \label{def:removing}
    Let $\mathcal{D}$ be a distribution over $\mathbb{R}$. 
    Let $W\subseteq \mathbb{R}$ be a subset of real number satisfying $\Pr_{x\sim \mathcal{D}}[x\in W]=\remove$.
    Let $D'$ be a distribution over real number that is obtained by sampling a value from $D$ conditioned on the value is not in $W$, i.e.,
    \begin{equation}
        \label{eq:remove_mass}
        \mathcal{D}'(x):=\left\{
        \begin{array}{ll}
            \frac{1}{1-\remove}\mathcal{D}(x),&\:\operatorname{if} x\notin W;\\
            0, &\:\operatorname{if} x\notin W.
        \end{array}
        \right.
    \end{equation}
    We say $\mathcal{D}'$ is $\remove$-removing from $\mathcal{D}$. \yc{wording.}
    \ynote{You can refer to [DK'23] to check their terminology. It may be a good idea to be consistent with them}
\end{definition}

\begin{definition}[Truncation of a distribution \yc{Do we need this?}]
    \label{def:truncation_of_dist}
    Let $D$ be a distribution over real number.
    Let $q_1$ and $q_2$ be the $\truncation$-quantile and $1-\truncation$-quantile respectively.
    Let $D'$ be a distribution over real number obtained by sampling from $D$ conditioned on the value is in $(q_1,q_2)$, i.e.,
    \begin{equation}
        \label{eq:truncation}
        D'(x):=\left\{
        \begin{array}{ll}
            \frac{1}{1-2\truncation}D(x),&\:\operatorname{if} x\in (q_1,q_2);\\
            0, &\:\operatorname{otherwise}.
        \end{array}
        \right.
    \end{equation}
    We say $D'$ is the $\truncation$-truncation of $D$.
\end{definition}

\begin{definition}[Truncation of a multiset.]
    \label{def:truncation_of_dist}
    Let $S$ be a multiset over real number.
    Let $q_1$ and $q_2$ be the $\truncation$-quantile and $1-\truncation$-quantile of $S$ respectively.
    Define $S':=\{x:x\in S\cap (q_1,q_2)\}$.
    We say $S'$ is the $\truncation$-truncation of $S$.
\end{definition}

\begin{lemma}[Robustness estimation \yc{wording} (Proposition 1.18 of \cite{diakonikolas2023algorithmic})]
\label{lemma-robustness-estimation}
    Let $0<\corruption<\remove<1/2$.
    Let $\cD$ be a distribution with mean $\mu$ satisfying that for any $\cD'$ which is a $2\remove$-removing of $\cD$, it holds that $\big|\expectDistrOf{x\sim\cD}{x}-\expectDistrOf{x\sim\cD'
    }{x}\big|\le\eta$.
    Let $S_0$ be a set of $n$ independent samples from $\cD$ and $S$ be a $\corruption$-corruption of $S_0$.
    Let $S_\truncation$ be the $\truncation$-truncation of $S$, and let $\hat{\mu}$ be the mean of $S_\truncation$.
    If $n\ge \frac{\log(1/\delta)}{(\remove-\corruption)^2}$, then $|\hat{\mu}-\mu|\le\eta$ with probability greater than $1-\delta$.
\end{lemma}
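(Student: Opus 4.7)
The plan is to reduce the adversarially truncated empirical mean to a data-dependent restriction of the clean sample $S_0$, and then invoke the hypothesis on $2\xi$-removings of $\cD$. Three ingredients cooperate: (i) quantile stability under $\gamma$-corruption, (ii) DKW-type uniform concentration of the empirical CDF of $S_0$, and (iii) a near-bijection between the adversarial survivors in $S_\xi$ and clean points of $S_0$ in the inner interval $[q_-,q_+]$.

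First, I would apply the DKW inequality: for $n \ge \log(2/\delta)/(2(\xi-\gamma)^2)$, with probability at least $1-\delta$, $\|F_{S_0} - F_\cD\|_\infty \le \xi-\gamma$. Since the corrupted sample $S$ agrees with $S_0$ on at least $(1-\gamma)n$ positions, one has deterministically $\|F_S - F_{S_0}\|_\infty \le \gamma$, and consequently the empirical $\xi$- and $(1-\xi)$-quantiles $q_-, q_+$ of $S$ satisfy $F_\cD(q_-) \le 2\xi$ and $F_\cD(q_+) \ge 1-2\xi$. In particular, the interval $[q_-,q_+]$ captures at least $1-2\xi$ of the mass of $\cD$.

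Second, I would decompose $S_\xi = C_\xi \sqcup B_\xi$ into clean survivors and adversarial survivors. Since truncation removes $2\xi n$ points total and at most $\gamma n$ are corrupted, we have $|B_\xi| \le \gamma n$, $|C_\xi| \ge (1-2\xi-\gamma)n$, and every point of $B_\xi$ lies in $[q_-,q_+]$. Using that the population mass on $[q_-,q_+]$ is at least $1-2\xi$, each adversarial survivor can be matched with a clean point of $S_0\setminus S_\xi$ whose value also lies in $[q_-,q_+]$. This rewrites $\hat\mu$, up to error absorbable in $\eta$, as the empirical mean of a size-$(1-2\xi)n$ subset $T \subseteq S_0$ consisting of samples in $[q_-,q_+]$. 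Defining $\cD' := \cD \mid X \in [q_-,q_+]$, which is a (at most) $2\xi$-removing of $\cD$, the hypothesis gives $|\mathbb{E}_{\cD'}[X]-\mu|\le\eta$; a uniform empirical-process bound over intervals (again DKW, since intervals have VC dimension two) controls $|\mathrm{mean}(T) - \mathbb{E}_{\cD'}[X]|$ by $O(\sqrt{\log(1/\delta)/n})$, which is lower-order under the stated sample complexity. Chaining via the triangle inequality yields $|\hat\mu - \mu| \le \eta$ with probability at least $1-\delta$.

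The main obstacle is making the matching in step two rigorous with the correct constant: a naive accounting gives a $(2\xi + 2\gamma)$-removing rather than a $2\xi$-removing, yielding only the weaker bound associated with a larger deletion fraction. The fact that $\gamma < \xi$ strictly, together with the DKW bound being tight at resolution $\xi - \gamma$, is what allows the population-side $2\xi$ to cover the empirical-side $2\xi$ without bloating by $\gamma$; this is precisely the role of the sample complexity $n \ge \log(1/\delta)/(\xi-\gamma)^2$. A secondary subtlety is that the hypothesis is stated for a $2\xi$-removing (exact mass), while the restriction to $[q_-,q_+]$ may remove strictly less mass; this is handled by padding $\cD'$ with an arbitrary extra removal of mass up to $2\xi$, which cannot decrease the mean's proximity to $\mu$ under a monotone interpretation of the hypothesis.
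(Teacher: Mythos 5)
The paper does not prove this lemma; it cites it verbatim as Proposition~1.18 of Diakonikolas--Kane, so there is no in-paper proof to compare against. Judged on its own terms, your argument has the right skeleton --- DKW to align the empirical and population CDFs, quantile stability under $\corruption$-corruption, and a matching that trades adversarial survivors in $S_\remove$ for clean points of $S_0$ --- but the mass accounting fails at the decisive step.

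From your own bounds $F_\cD(q_-) \le 2\remove$ and $F_\cD(q_+) \ge 1-2\remove$, the interval $[q_-,q_+]$ retains population mass $F_\cD(q_+)-F_\cD(q_-) \ge 1-4\remove$, not $1-2\remove$ as you then assert. Consequently $\cD' = \cD \mid [q_-,q_+]$ is only guaranteed to be a $4\remove$-removing, and the stated hypothesis (which is quantified only over $2\remove$-removings) does not apply. You anticipate this objection and claim that the sample complexity $n \gtrsim \log(1/\delta)/(\remove-\corruption)^2$ rescues the factor, but it does not: that choice of $n$ drives the DKW error on $F_{S_0}$ down to about $\remove-\corruption$, and after adding the deterministic $\corruption$ shift from corruption one gets $\|F_S-F_\cD\|_\infty \le \remove$, which is exactly what produces a $2\remove$-per-side ($4\remove$ total) loss. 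To land at $2\remove$ total you would need the corruption itself, not just sampling noise, to perturb the CDF by $o(\remove)$, which is false when $\corruption$ is close to $\remove$. Your closing ``padding'' remark also points the wrong way --- the restriction removes too much mass, not too little --- and in any event appending additional removal to $\cD'$ need not move its mean monotonically toward $\mu$, so the ``monotone interpretation'' is not justified by the hypothesis.

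For the paper's downstream use this slack is harmless (the lemma is invoked with $\remove = 2\corruption$ and the final bound carries a hidden constant anyway), but as a proof of the lemma \emph{as stated} the $2\remove$-removing claim is not established. A fix would either weaken the statement to $4\remove$-removings, or replace the interval-restriction argument with the subset-stability argument actually used in Diakonikolas--Kane, which works directly with large subsets of the clean sample $S_0$ rather than a population conditional on a data-dependent interval. Separately, the matching of $B_\remove$ to excluded clean points and the uniform-over-intervals empirical-process step are only sketched; since $[q_-,q_+]$ is data-dependent, these need to be written out, though they are not where the real gap lies.
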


\begin{lemma}[Robustness of bounded $h$th central moment. \yc{wording}]
\label{lemma-bounded-central-moment}
    Let $X$ be a random variable with distribution $\cD_{X}$ and mean $\mu$.
    Let $h\in\mathbb{N}$.
    If $\expect{|X-\mu|^h}\le 1$, then for any random variable $X'$ whose distribution $\cD_{X'}$ is a $\remove$-removing of $\cD$ where $\remove<1/2$, it holds that $\big|\expect{X'}-\expect{X}\big|\le O(\remove^{1-\frac{1}{h}})$.
\end{lemma}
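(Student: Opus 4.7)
The plan is to reduce the statement to one application of Hölder's inequality after centering. First I would translate out the mean: set $Y := X - \mu$ and $Y' := X' - \mu$, so that $\expect{Y} = 0$, the hypothesis becomes $\expect{|Y|^h} \le 1$, and the quantity to control is simply $|\expect{Y'}| = |\expect{X'} - \expect{X}|$ (centering does not change the mean shift). Let $W \subseteq \mathbb{R}$ be the set with $\Pr[X \in W] = \remove$ that realises the $\remove$-removing in Definition~\ref{def:removing}; since $\cD_{X'}$ is proportional to $\cD_X$ on $W^c$ and vanishes on $W$,
\[
\expect{Y'} \;=\; \frac{\expect{Y\, \indic{X \notin W}}}{1-\remove} \;=\; -\frac{\expect{Y\, \indic{X \in W}}}{1-\remove},
\]
where the second equality uses $\expect{Y} = 0$.

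Then I would apply Hölder's inequality with conjugate exponents $h$ and $h/(h-1)$ on the right-hand side to obtain
\[
\bigl|\expect{Y\, \indic{X \in W}}\bigr| \;\le\; \bigl(\expect{|Y|^h}\bigr)^{1/h}\, \bigl(\Pr[X \in W]\bigr)^{(h-1)/h} \;\le\; \remove^{\,1 - 1/h}.
\]
Combined with $1/(1-\remove) \le 2$ from $\remove < 1/2$, this yields $|\expect{X'} - \expect{X}| \le 2\remove^{\,1-1/h}$, which is the desired $O(\remove^{1-1/h})$ bound. The estimate is uniform in the choice of $W$, so it covers the worst-case $\remove$-removing and hence the adversarial deletion pattern needed in Lemma~\ref{lemma-robustness-estimation}.

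The whole argument is essentially a one-line Hölder estimate, so I do not anticipate any real obstacle. The only subtlety is that the hypothesis bounds the $h$-th \emph{central} moment rather than the raw moment, which is why the initial centering step is essential; without it, the Hölder bound would pick up an extra $|\mu|^h$ term and the exponent $1 - 1/h$ would no longer be tight. It is also worth noting that the rate $\remove^{1-1/h}$ is exactly what is needed downstream: setting $h = \log(1/\gamma)$ in the proof of Theorem~\ref{thm:result-algorithm} converts it to the $\widetilde{O}(\gamma)$ error that drives the final shadow tomography guarantee.
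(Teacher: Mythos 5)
Your proof is correct, and it takes a genuinely different and cleaner route than the paper's. The paper's argument proceeds by comparing tail probabilities of $X'$ and $X$, showing $\Pr[X'-\mu\ge t]\le\Pr[X-\mu\ge t]+\min(\remove,1/t^h)$ (and a matching lower bound on the left tail), then integrating the excess $\int_0^\infty\min(\remove,1/t^h)\,\dm t$ by splitting the integral at $t=\remove^{-1/h}$; this yields the constant $\tfrac{h}{h-1}\le 2$. Your argument instead expresses the mean shift exactly as $-\tfrac{1}{1-\remove}\expect{Y\,\indic{X\in W}}$ using $\expect{Y}=0$, and hits it with a single Hölder estimate with exponents $(h,\,h/(h-1))$, picking up the factor $\remove^{1-1/h}$ immediately. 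The Hölder route is shorter and avoids the integration bookkeeping (and the sign/$\min$-vs-$\max$ bookkeeping, where the paper's write-up has a couple of typos); the tail-integration route gives the marginally sharper constant $\tfrac{h}{h-1}$ and would adapt more directly if one had a tail bound rather than a moment bound, but for the application here (where one only needs $O(\remove^{1-1/h})$ and takes $h=\Theta(\log(1/\gamma))$) the two are equivalent. Your observation that centering is essential — otherwise Hölder would cost you a spurious $|\mu|$ term — is exactly the right thing to flag.
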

\begin{proof}
    Without loss of generality, we assume $\expect{X'}-\expect{X}\ge 0$.
    Because $\cD_{X'}$ is a $\xi$-removing of $\cD_{X}$, we have $\big|\Pr[X'-\mu\ge t]-\Pr[X-\mu\ge t]\big|\le \remove$ for all $t\in\mathbb{R}$. Hence, $\Pr[X'-\mu\ge t]\le \Pr[X-\mu\ge t]+\remove$.
    
    Besides, it holds that $\Pr[X'-\mu\ge t] \le \frac{1}{1-\remove}\Pr[X-\mu\ge t]$ for all $t\in \mathbb{R}$.
    We have $\Pr[X'-\mu\ge t]\le \Pr[X-\mu\ge t] +\frac{\remove}{1-\remove} \Pr[X-\mu\ge t] =\Pr[X-\mu\ge t] + \Pr[X-\mu\ge t]$ for $\remove<1/2$.
    By using the inequality for the $h$th moment $Pr[|X-\mu|\ge t]\le \frac{\expect{|X-\mu}}{t^h}$,
    when $\expect{|X-\mu|^h}\le 1$, we have  $\Pr[|X'-\mu|\ge t]\le \Pr[|X-\mu|\ge t] + 1/t^h$.

    Therefore, $\Pr[X'-\mu\ge t]\le \Pr[|X-\mu|\ge t]+\min(\remove, 1/t^h)$. 

    Similarly, we have $\Pr[\mu-X'\ge t]\ge \Pr[\mu-X \ge t] - \remove$.
    Also, the removal of the mass cannot be greater than the original mass.
    It holds that $\Pr[\mu-X'\ge t]\ge \Pr[\mu-X\ge t] - 1/t^h$.

    Therefore, $\Pr[\mu-X'\ge t]\ge \Pr[\mu-X\ge t] -\max(\remove, 1/t^h)$.
    
    Now we are ready to upper bound the difference between $\expect{X'}$ and $\expect{X'}$.
    Let $p_+:=\Pr[X'\ge \mu]$ and $p_-:=\Pr[X'\le \mu]$. 
    We have 
    \begin{align*}
        \expect{X'}-\expect{X} 
        &=   \expect{X'}-\mu\\
        &=\expect{X'-\mu}\\
        &= p_{+}\expectCond{X'-\mu
        }{X'-\mu\ge 0} + p_{-} \expectCond{X'-\mu
        }{X' - \mu\ge 0}\\
        &=p_{+}\int_0^\infty \Pr[X'-\mu\ge t] \dm t- p_{-}\int_0^\infty\Pr[\mu-X'\ge t] \dm t\\
        &\le p_{+}\int_0^\infty \Pr[X-\mu\ge t] + \min\{\remove,1/t^h\} \dm t \\&\quad -p_{-}\int_0^\infty\Pr[\mu-X'\ge t]-\max\{\remove,1/t^h\} \dm t\\
        &=\int_0^\infty \min\{\remove,1/t^h\} \dm t\\
        & = \int_0^{\remove^{-1/h}} \remove \dm t + \int_{\remove^{-1/h}}^\infty 1/t^h \dm t\\
        &=(\frac{h}{h-1})\remove^{1-1/h}.
    \end{align*}
    
\end{proof}

\section{Quantum state tomography}

In this section, we show how to utilize the robust shadow tomography that we have designed in Section \ref{sec:shadow-tomograghy} to create a robust quantum state tomography. At a high level, we can reduce the problem of learning a quantum state to the problem of finding (one) of the closest quantum states to the unknown state in a covering set of quantum states. 
Let us start by constructing a covering set of quantum states of a given rank. 
\begin{lemma}\label{lem:covering}
    Let $1\le r \le d$ be integers and $\eps>0$. There exists a set $\mathcal{N}$ of quantum states of rank at most $r$ such that:
    \begin{itemize}
        \item $ |\mathcal{N}| \le \left(1+ \frac{4r}{\eps}\right)^r \left( \frac{8}{\eps}\right)^{rd}$,
        \item for every quantum state $\rho$ of rank $r$, there exists a quantum state $\sigma \in \mathcal{N}$ such that $\|\rho - \sigma \|_{\tr}\le \eps$. 
    \end{itemize}
\end{lemma}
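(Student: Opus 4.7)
The plan is to use the spectral decomposition to parameterize rank-$r$ states by a pair consisting of a probability vector in the simplex $\Delta_r = \{p \in \R^r_{\ge 0} : \sum_i p_i = 1\}$ and an orthonormal $r$-frame in $\C^{\dims}$, and then combine a net on each component into a product net of states. Concretely, I would write $\rho = V \diag(p_1, \dots, p_r) V^\dagger$ with $V \in \C^{\dims \times r}$ satisfying $V^\dagger V = I_r$ and $(p_1, \dots, p_r) \in \Delta_r$.

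Next I would construct two component nets. First, an $\eps/2$-net $\mathcal{P} \subset \Delta_r$ in $\ell_1$ of cardinality at most $(1 + 4r/\eps)^r$, via a standard grid-and-project argument inside $[0,1]^r$. Second, an $\eps/4$-net $\mathcal{V}$ for the Stiefel manifold $\{V \in \C^{\dims \times r} : V^\dagger V = I_r\}$ in the Frobenius norm of cardinality at most $(8/\eps)^{rd}$, obtained by a volume argument; this Stiefel manifold lies inside the sphere of radius $\sqrt{r}$ in $\C^{\dims \times r}$. The final net is $\mathcal{N} = \{V \diag(p) V^\dagger : V \in \mathcal{V},\ p \in \mathcal{P}\}$, whose cardinality is at most the product of the two bounds, matching the statement of the lemma.

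To bound the covering error, given $\rho = V_0 D_0 V_0^\dagger$ with closest net elements $V \in \mathcal{V}$ and $D = \diag(p)$ for $p \in \mathcal{P}$, I would write
\[
\rho - \sigma = V_0 (D_0 - D) V_0^\dagger + \bigl(V_0 D V_0^\dagger - V D V^\dagger\bigr),
\]
where $\sigma = V D V^\dagger$. Since $V_0^\dagger V_0 = I_r$, the singular values of $V_0 (D_0 - D) V_0^\dagger$ coincide with those of $D_0 - D$, so the first summand has trace norm $\|D_0 - D\|_1 \le \eps/2$. For the second summand, the factorization $V_0 D V_0^\dagger - V D V^\dagger = (V_0 - V) D V_0^\dagger + V D (V_0 - V)^\dagger$, combined with the Cauchy--Schwarz-type trace inequality $\|X Y^\dagger\|_1 \le \|X\|_F \|Y\|_F$, yields a bound of $2 \|D\|_F \|V_0 - V\|_F \le 2 \cdot 1 \cdot \eps/4 = \eps/2$, using $\|D\|_F \le \|D\|_1 = 1$. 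Summing the two bounds gives $\|\rho - \sigma\|_{\tr} \le \eps$, as desired.

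The main obstacle is verifying the claimed Stiefel-net cardinality $(8/\eps)^{rd}$: a naive volume argument in the ambient $\R^{2dr}$ gives only $(c/\eps)^{2dr}$, so the stated exponent requires either exploiting the lower intrinsic dimension of the Stiefel manifold (which is $r(2d - r)$ over $\R$) or a more careful covering argument tailored to this parameterization. For the downstream application to robust full-state tomography, however, the precise exponent is not critical, since only $\log |\mathcal{N}| = \tildeO(dr \log(r/\eps))$ enters the sample complexity bound via the reduction to shadow tomography.
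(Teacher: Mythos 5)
Your parameterization by (eigenvalue vector, Stiefel frame) is a natural alternative, and your error-splitting argument for the covering distance is correct (and arguably cleaner than the paper's). However, the gap you flag yourself — the Stiefel net cardinality — is a genuine one, and it is exactly where the paper's construction differs from yours. The real dimension of $\{V \in \C^{\dims\times r} : V^\dagger V = I_r\}$ is $r(2d-r)$, which is close to $2rd$ for $r \ll d$; so even a net that exploits the intrinsic dimension (via a Lipschitz retraction from the ambient ball, say) gives you roughly $(c/\eps)^{2rd}$, not $(8/\eps)^{rd}$. There is no clean way to shave the factor of $2$ in the exponent while insisting that your net elements be genuine orthonormal frames.

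The paper sidesteps this entirely by \emph{dropping the orthogonality constraint on the approximant}. It takes a single $\eps/4$-net $\mathcal{N}_0$ of the unit sphere of $\C^d$ (size $(8/\eps)^d$) and defines net states as convex combinations $\sum_i \lambda_i \qproj{\phi_i}/\sum_i\lambda_i$ with $\phi_i \in \mathcal{N}_0$ chosen \emph{independently} for each $i$ — so the $\phi_i$ need not be orthogonal and the resulting net state need not be a spectral decomposition. The cardinality is then trivially $|\mathcal{N}_0|^r$ times the simplex factor, which is exactly the claimed bound, and the covering argument rounds each eigenvalue and each eigenvector of the target state separately, paying a trace-norm cost at most $\|\lambda_i - \mu_i\| + \mu_i \tracenorm{\qproj{\phi_i}-\qproj{\psi_i}}$ per term. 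Your closing observation is correct: for the downstream sample complexity only $\log|\mathcal{N}| = \Theta(dr\log(r/\eps))$ matters, so the exponent constant is immaterial — but if you want the cardinality as stated in the lemma, you should use unconstrained product nets rather than a Stiefel net.
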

\begin{proof}
The set of pure states $\mathcal{S}_{d,1}$ can be identified by the set $S^{d}$ of unit vectors in $\mathbb{C}^d$. 
    Since we have that $N(S^{d}, \|\cdot\|_2, \eps/4)\le (8/\eps)^d$ \cite[Lemma 5.3]{aubrun2017alice}, we denote by $\mathcal{N}_0$ a set of pure states achieving this bound. Then,  we can construct the following set:
    \begin{align}
        \mathcal{N} = \left\{ \rho = \frac{1}{\sum_{i=1}^r \lambda_i}\sum_{i=1}^r \lambda_i \proj{\phi_i} : \phi_i\in \mathcal{N}_0, \lambda_i \in \frac{\eps}{4r}\mathbb{Z}\cap [0,1] , 1-\frac{\eps}{4}\le \sum_{i=1}^r \lambda_i \le 1\right\}
    \end{align}
    We have that $|\mathcal{N}| \le \left(1+ \frac{4r}{\eps}\right)^r |\mathcal{N}_0|^r = \left(1+ \frac{4r}{\eps}\right)^r \left( \frac{8}{\eps}\right)^{rd}$. 

We claim that $\mathcal{N}$ is a covering set for the set of quantum states of rank $r$. To prove this, we take $\sigma$ a quantum state of rank $r$, it can be written as:
\begin{equation}
    \sigma = \sum_{i=1}^r \mu_i\proj{\psi_i}.
\end{equation}
Let $\lambda_i \in \frac{\eps}{4r}\mathbb{Z}\cap [0,1] $ such that $0\le \mu_i-\lambda_i \le \frac{\eps}{4r}$. This gives $1-\frac{\eps}{4}\le \sum_{i=1}^r \lambda_i \le 1$. Similarly, let $\phi_i\in \mathcal{N}_0$ such that $\|\phi_i - \psi_i\|_2\le \frac{\eps}{4}$. Construct $\rho = \frac{1}{\sum_{i=1}^r \lambda_i} \sum_{i=1}^r \lambda_i \proj{\phi_i} \in \mathcal{N}$. We have that 
\begin{align}
    \|\rho - \sigma \|_{\tr} &= \left\|  \frac{1}{\sum_{i=1}^r \lambda_i}\sum_{i=1}^r \lambda_i \proj{\phi_i}- \sum_{i=1}^r \mu_i\proj{\psi_i} \right\|_{\tr}
    \\&\le  \left\| \sum_{i=1}^r \lambda_i \proj{\phi_i}- \sum_{i=1}^r \mu_i\proj{\psi_i} \right\|_{\tr} + \left\|  \frac{1}{\sum_{i=1}^r \lambda_i}\sum_{i=1}^r \lambda_i \proj{\phi_i} - \sum_{i=1}^r \lambda_i \proj{\phi_i} \right\|_{\tr}
    \\&\le \left\|  \sum_{i=1}^r \lambda_i \proj{\phi_i}- \sum_{i=1}^r \mu_i\proj{\phi_i} \right\|_{\tr} + \left\|  \sum_{i=1}^r  \mu_i \proj{\phi_i}- \sum_{i=1}^r  \mu_i\proj{\psi_i} \right\|_{\tr} + \frac{1}{2} \left|1 -\sum_{i=1}^r \lambda_i  \right|
    \\&\le \sum_{i=1}^r |\lambda_i - \mu_i| + \sum_{i=1}^r \mu_i   \left\|  \proj{\phi_i}- \proj{\psi_i} \right\|_{\tr} +\frac{\eps}{8}
    \\&\le \sum_{i=1}^r |\lambda_i - \mu_i| + 2\sum_{i=1}^r \mu_i   \left\| \phi_i- \psi_i \right\|_{2}+\frac{\eps}{8}
    \\&\le  \sum_{i=1}^r \frac{\eps}{4r} + 2\sum_{i=1}^r \mu_i   \frac{\eps}{4} +\frac{\eps}{8} < \eps. 
\end{align}

\end{proof}
Combining this covering result with the robust shadow tomography we can propose a robust state tomography algorithm. 
\begin{theorem}
    Under $\gamma$-adversarial corruption, there exists an full state tomography algorithm that achieves an error of $\eps = O(\gamma\sqrt{\rk})$ in trace distance using $\ns = \tildeO{\dims\rk^2/\eps^2}= \tildeO{\dims\rk/\gamma^2}$ samples.
\end{theorem}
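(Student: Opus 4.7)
The plan is to instantiate the covering-based reduction from full-state tomography to robust shadow tomography sketched in Section~2.2. First, invoke Lemma~\ref{lem:covering} with covering radius $\eps_0 = c\eps$ for a small absolute constant $c>0$, obtaining a finite set $\mathcal{N}$ of rank-$r$ quantum states with $|\mathcal{N}|\le\exp(O(dr\log(r/\eps)))$ such that every rank-$r$ state lies within trace distance $\eps_0$ of some $\sigma_i\in\mathcal{N}$. For every ordered pair $(\sigma_i,\sigma_j)\in\mathcal{N}^2$, define the Helstrom observable $O_{ij}$ as the projector onto the positive eigenspace of $\sigma_i-\sigma_j$. Since $\sigma_i-\sigma_j$ has rank at most $2r$, the projector $O_{ij}$ has rank at most $2r$ as well, so $0\preceq O_{ij}\preceq\eye_\dims$ and $\hsnorm{O_{ij}}\le\sqrt{2r}$. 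Moreover the extremal identity $\Tr[(\sigma_i-\sigma_j)O_{ij}]=\|\sigma_i-\sigma_j\|_{\Tr}=\max_{0\preceq O\preceq\eye_\dims}\Tr[(\sigma_i-\sigma_j)O]$ holds by construction, which is the key property that will drive the tournament analysis.

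Second, apply Theorem~\ref{thm:result-algorithm} to the family of $M=|\mathcal{N}|^2$ observables $\{O_{ij}\}$ to obtain, with probability at least $1-\delta$, estimates $E_{ij}$ satisfying $|E_{ij}-\Tr[\rho O_{ij}]|\le\eps_1$ simultaneously for all pairs, where $\eps_1 = O(\gamma\sqrt{r}\log(1/\gamma))$. The resulting copy complexity is
\[
\ns = O\Bigl(\tfrac{1}{\gamma^2}\log(M/\delta)\Bigr) = O\Bigl(\tfrac{dr\log(r/\eps)\log(1/\delta)}{\gamma^2}\Bigr) = \tildeO{\tfrac{dr}{\gamma^2}},
\]
and under the substitution $\eps = \Theta(\gamma\sqrt{r})$ this equals $\tildeO{dr^2/\eps^2}$, matching the claimed bound.

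Third, aggregate the estimates via a Scheff\'e--Yatracos tournament on the cover: set $d_i \eqdef \max_{j}|E_{ij}-\Tr[\sigma_i O_{ij}]|$ and output $\hat\sigma=\sigma_{\hat i}$ with $\hat i = \argmin_i d_i$. Let $\sigma_{i^*}$ denote the cover element closest to $\rho$, so $\|\rho-\sigma_{i^*}\|_{\Tr}\le\eps_0$. The triangle inequality together with $|\Tr[(\rho-\sigma_{i^*})O_{i^*j}]|\le\|\rho-\sigma_{i^*}\|_{\Tr}$ gives $d_{i^*}\le\eps_1+\eps_0$. Conversely, evaluating $d_{\hat i}$ at the index $j=i^*$ and using $|E_{\hat i,i^*}-\Tr[\rho O_{\hat i,i^*}]|\le\eps_1$ together with the Helstrom extremality of $O_{\hat i,i^*}$ lower-bounds $d_{\hat i}\ge\|\sigma_{\hat i}-\sigma_{i^*}\|_{\Tr}-\eps_0-\eps_1$. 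Combining $d_{\hat i}\le d_{i^*}$ with one final application of the triangle inequality yields $\|\rho-\sigma_{\hat i}\|_{\Tr} = O(\eps_0+\eps_1) = \tildeO{\gamma\sqrt{r}}$.

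The main obstacle is the bookkeeping in the tournament step: although the Scheff\'e/Yatracos argument is classical for density estimation, one must lift it to the trace-norm setting using the Helstrom extremal identity and verify that $d_i$ tracks $\|\rho-\sigma_i\|_{\Tr}$ up to additive $O(\eps_0+\eps_1)$ slack on both sides. One also needs to choose $\eps_0$ of the same order as $\eps_1$ so that $\log|\mathcal{N}|$ remains $\tildeO{dr}$ in the sample-complexity calculation. Beyond this, the argument requires no new ingredient beyond Lemma~\ref{lem:covering}, Theorem~\ref{thm:result-algorithm}, and the Helstrom identity.
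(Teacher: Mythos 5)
Your proposal is correct and follows essentially the same route as the paper: cover the rank-$r$ states with Lemma~\ref{lem:covering}, estimate all Holevo--Helstrom observables $O_{ij}$ via the robust shadow tomography of Theorem~\ref{thm:result-algorithm}, and select the winner by the Scheff\'e/Yatracos rule $\argmin_i\max_j|E_{ij}-\Tr[\sigma_i O_{ij}]|$, with the triangle-inequality bookkeeping giving $\|\hat\rho-\rho\|_{\Tr}=O(\eps_0+\eps_1)$. The only cosmetic differences are in index conventions for the Helstrom observables and the exact constants chosen for $\eps_0$.
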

\begin{proof}
    We will use the covering construct in Lemma \ref{lem:covering}. Namely, we have an $\eps/5$-covering set $\mathcal{N}$ of size $M=|\mathcal{N}| \le \left(1+ \frac{20r}{\eps}\right)^r \left( \frac{40}{\eps}\right)^{rd}$. We list the elements of this set as $\cN = \{\rho_i\}_{i\in [M]}$. For each $i \neq j$, we construct the Holevo–Helstrom observable $ O_{i,j} $ satisfying:
    \begin{equation}
        \tr(O_{i,j}(\rho_i-\rho_j)) = \|\rho_i-\rho_j\|_{\tr}.
    \end{equation}
    Since $\rho_i$ has rank at most $r$, $O_{i,j}$ has rank at most $2r$ thus 
    $\hsnorm{O_{i,j}}\le \sqrt{2r}$. 
    
    Using our robust classical shadows of Theorem \ref{thm:result-algorithm}, we can approximate $(\tr(O_{i,j}\rho))_{i,j \in [M]}$ simultaneously to within an error $\eps$ using a number of copies 
    \begin{equation}
         \ns = \bigO{\frac{\max_{i\in[\nobs]}\hsnorm{O_i}^2}{\eps^2}\log\frac{\nobs^2}{\delta}}  = \bigO{\frac{r^2d\log(40/\eps)+r^2\log(1+20r/\eps)+r\log(1/\delta)}{\eps^2}}.
    \end{equation}
    Concretly, we obtain $(E_{i,j})_{i,j}$ such that for all $i,j \in [M]$, we have $|\tr(O_{i,j}\rho)-E_{i,j}|\le \eps/5$. 
    Our approximation of the unknown quantum state is then given by \cite{Yatracos1985Jun,buadescu2021improved},
    \begin{equation}\label{eq:stat}
        \hat{\rho} = \rho_l  \quad \text{ where } \quad   l\in \argmin_{l \in [M]}\left\{ \max_{i\in [M]} \left|\tr(O_{i,l}\rho_l) -E_{i,l}\right| \right\}.
    \end{equation}
    Since $\mathcal{N}$ is a covering, there exists $\rho_k \in \mathcal{N}$ such that $\|\rho - \rho_k\|_{\tr}\le \eps/5$. We have that from \eqref{eq:stat}:
    \begin{equation}
        \left|\tr(O_{k,l}\rho_l) -E_{k,l}\right| \le  \max_{i}  \left|\tr(O_{i,k}\rho_k) -E_{i,k}\right| \le  \|\rho - \rho_k\|_{\tr}+\eps/5\le 2\eps/5. 
    \end{equation}
   We deduce that by the triangle inequality:
    \begin{align}
        \|\hat{\rho} - \rho\|_{\tr} &\le    \|\rho_l - \rho_k\|_{\tr} +\|\rho_k - \rho\|_{\tr}
        \\&\le  \tr(O_{k,l}( \rho_k-\rho_l)) +\eps/5
        \\&=   E_{k,l} -\tr(O_{k,l} \rho_l)+ \tr(O_{k,l} \rho_k) -E_{k,l}     +\eps/5
        \\&\le 2\eps/5 +\tr(O_{k,l}\rho) -E_{k,l}+ 2\eps/5 
        \\&\le \eps. 
    \end{align}
\end{proof}

\section*{Acknowledgement}
Vladimir Braverman is supported by NSF 2528780 and the Naval Research (ONR) grant N00014-23-1-2737. Nai-Hui Chia is supported by NSF Award FET-2243659, NSF Career Award FET-2339116, and DOE Quantum Testbed Finder Award DE-SC0024301. Chia-Ying Lin is supported by DOE Quantum Testbed Finder Award DE-SC0024301 and NSF Award FET-2243659. Yu-Ching Shen is supported by NSF Award FET-2243659 and NSF Career Award FET-2339116. Yuhan Liu is supported by the Naval Research (ONR) grant N00014-23-1-2737.

\bibliography{refs}
\bibliographystyle{alpha}

\end{document}